\def\BibTeX{{\rm B\kern-.05em{\sc i\kern-.025em b}\kern-.08em T\kern-.1667em\lower.7ex\hbox{E}\kern-.125emX}}
\newtheorem{theorem}{\it Theorem}
\newtheorem{lemma}{\it Lemma}
\newtheorem{corollary}{\it Corollary}
\newtheorem{definition}{\it Definition}
\newtheorem{example}{\it Example}
\definecolor{darkgreen}{rgb}{0.0, 0.5, 0.0}
\begin{document}

\title{Towards Optimal Tradeoff Between Data Freshness and Update Cost in Information-update Systems}

\author{Zhongdong Liu, Bin Li, Zizhan Zheng, Y. Thomas Hou, and Bo Ji
\thanks{The work of Bo Ji and Zhongdong Liu was supported in part by the NSF under Grants CNS-2112694 and CNS-2106427.
The work of Zizhan Zheng was supported in part by the NSF under Grant CNS-1816943. 
The work of Y.~T.~Hou was supported in part by ONR MURI grant N00014-19-1-2621, Virginia Commonwealth Cyber Initiative (CCI), and Virginia Tech Institute for Critical Technology and Applied Science (ICTAS). 
A preliminary version of this work was
presented at ICCCN 2022 as an invited paper \cite{zdicccn22}.
}
\thanks{Zhongdong Liu (zhongdong@vt.edu) and Bo Ji (boji@vt.edu) are with the Department of Computer Science, Virginia Tech, Blacksburg, VA. 
Bin Li (binli@psu.edu) is with the Department of Electrical Engineering, the Pennsylvania State University, State College, PA. 
Zizhan Zheng (zzheng3@tulane.edu) is with the Department of Computer Science, Tulane University, New Orleans, LA. 
Y.~Thomas Hou (thou@vt.edu) is with the Bradley Department of Electrical and Computer Engineering, Virginia Tech, Blacksburg, VA.
}}

\maketitle

\begin{abstract}
In this paper, we consider a discrete-time information-update system, where a service provider can proactively retrieve information from the information source to update its data and users query the data at the service provider. One example is crowdsensing-based applications. In order to keep users satisfied, the application desires to provide users with fresh data, where the freshness is measured by the \emph{Age-of-Information (AoI)}. However, maintaining fresh data requires the application to update its database frequently, which incurs an update cost (e.g., incentive payment). Hence, there exists a natural tradeoff between the AoI and the update cost at the service provider who needs to make update decisions. 
To capture this tradeoff, we formulate an optimization problem with the objective of minimizing the total cost, which is the sum of  the staleness cost (which is a function of the AoI) and the update cost. Then, we provide two useful guidelines for the design of efficient update policies. 
Following these guidelines and assuming that the aggregated request arrival process is Bernoulli, we prove that there exists a threshold-based policy that is optimal among all online policies and thus focus on the class of threshold-based policies. Furthermore, we derive the closed-form formula for computing the long-term average cost under any threshold-based policy and obtain the optimal threshold. 
Finally, we perform extensive simulations using both synthetic data and real traces to verify our theoretical results and demonstrate the superior performance of the optimal threshold-based policy compared with several baseline policies.
\end{abstract}

\begin{IEEEkeywords}
Data freshness, update cost, MDP, threshold-based policy,  Age-of-Information.
\end{IEEEkeywords}

\section{Introduction}
With the remarkable development of communication networks and smart portable devices in recent years, we have witnessed significant advances in crowdsensing-based applications (e.g., Google Waze \cite{Waze} and GasBuddy \cite{GasBuddy}).  These applications provide services to users by resorting to the community to sense and send back real-time information (e.g., traffic conditions and gas prices) \cite{li2021achieving}. 
To satisfy the diverse needs of users,  such applications need to maintain their knowledge of a set of distributed points of interest (PoI).
For example, GasBuddy monitors gasoline prices at a large number of scattered gas stations in a certain area. 
In order to quickly and accurately respond to users’  requests, the applications need to keep their data fresh. However, given the dynamic changes of the data, maintaining the freshness of data introduces a natural tradeoff between data freshness and update cost. On the one hand, users are unsatisfied if the responses to their requests are outdated; on the other hand, there is a cost for the applications to update their data because updating data relies on user feedback and often requires monetary payment to incentivize users \cite{GasBuddy,li2021achieving}.

\begin{figure}[!t]
\centering
\includegraphics[width=0.42\textwidth]{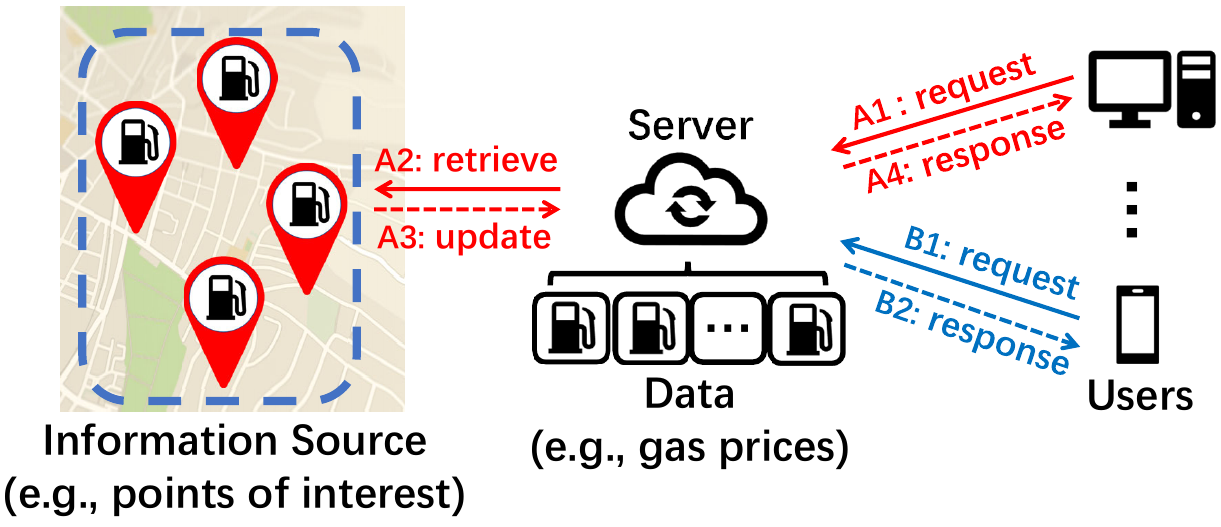}
\caption{An illustration of our system model. Upon receiving a request from the users, the server can either first update the data and then reply (red path: A$1$-A$4$) or simply reply with local data (blue path: B$1$-B$2$).}
\label{fig:model}
\end{figure}

In fact, the tradeoff between data freshness and update cost does not only exist in crowdsensing-based applications, but also in a wide variety of time-sensitive data-driven applications that require timely information updates~\cite{kaul12infocom,li2020waiting, zhongdong, ling2004tradeoff, MarketData}.
For example, in stock analysis applications, the server keeps track of the prices of a large number of stocks and generates different versions of the analysis reports for the stocks at certain times, and users send requests to query these analysis reports~\cite{MarketData}.
To ensure that users receive the real-time analysis of the stock they are trading with, the server needs to retrieve timely information (e.g., various stock market indexes) from the stock market,
which incurs an update cost (e.g., bandwidth resources).
Similar applications also include news feeds, weather updates,
and flight aggregators.

The aforementioned applications have two notable characteristics: 
First,  the server can proactively retrieve information from the information source to update its data, and users need to query the server to obtain the data (i.e., the ``Pull'' model~\cite{sangglobecom17,li2020waiting});
Second, the responses to users' requests (e.g., gas prices) typically do not require significant processing, and the packet size is usually small, making the packet transmission time negligible. However, retrieving the data from the information source often requires certain resources and introduces costs. 
These two characteristics not only distinguish such applications from other ones whose update costs mainly come from service and communication delays \cite{huang2015optimizing, chen2016age, kadota2016minimizing} but also lead to the tradeoff between the data freshness and the update cost.

To that end, in this work, we aim to optimize the tradeoff between data freshness and update cost.
Specifically, we consider a discrete-time system in the setting where a service provider can proactively retrieve information from the information source and users obtain the data at the service provider by sending requests (see Fig.~\ref{fig:model}).
The freshness of the data received by users is measured by a popular timeliness metric called \textit{Age-of-Information (AoI)} \cite{kaul12infocom}, which is defined as the time elapsed since the most recent update occurred.
To represent the dissatisfaction of users receiving stale data, we introduce the \emph{staleness cost}, which is a non-decreasing function of the AoI (see formal definition in Section~\ref{sec:model}). 
Clearly, one needs to account for both the update cost and the staleness cost when designing an online update policy.

We summarize our main contributions as follows.

\emph{First}, we study the tradeoff between the data freshness and the update cost by formulating an optimization problem to minimize the sum of the staleness cost (which is a function of the AoI) and the update cost.

\emph{Second}, we provide two useful guidelines for the design of optimal update policies.
These guidelines suggest that
$1$) the service provider should update the data only at a point when it receives a request, and $2$) the server should perform an update when the staleness cost is no smaller than the update cost.

\emph{Third}, following these guidelines and assuming that the request arrival process is Bernoulli, we reformulate our problem as a Markov decision process (MDP) and show that there exists a threshold-based policy that is optimal among all online policies, which motivates us to focus on the class of threshold-based policies. Furthermore, we derive the closed-form  expression  of  the average cost under any threshold-based policy and obtain the optimal threshold.

\emph{Finally}, we perform extensive simulations using both synthetic data and real traces to verify our theoretical results and evaluate the performance of our proposed policy compared with several baseline policies. Our simulation results show that the threshold-based policy outperforms the baselines in more general settings (e.g., when the request arrival process is non-Bernoulli).

The remainder of this paper is organized as follows. 
We first discuss related work in  Section~\ref{sec:relatedwork}. 
The system model is described in Section~\ref{sec:model}.
Two guidelines for designing update policies are provided in Section~\ref{sec:algDesign}.
Then, we prove that our MDP formulation admits an optimal threshold-based policy and derive the optimal threshold in Sections~\ref{sec:monotone_optimal_policies}~and~\ref{sec:threshold_optimality}, respectively.
Finally, we present the numerical results in Section~\ref{sec:simulation} and conclude our paper in Section~\ref{sec:conclusion}.

\section{Related Work}\label{sec:relatedwork}
Ever since the concept of AoI was introduced in \cite{kaul12infocom}, the study on the AoI has attracted a  lot of research interest.
There is a large body of work that provides detailed analyses on the AoI performance of information-update systems under different queueing models (M/M/1, M/D/1, etc.) and scheduling policies (FCFS, LCFS, etc.)  \cite{costa2014age, najm2016age}.

Another important line of research focuses on AoI minimization.
One specific type of optimization problem, which is similar to our work, is the joint minimization of AoI and certain costs \cite{8619768,tripathi2021online,9488746}. 
In \cite{8619768}, the authors consider a discrete-time system where an information source is monitored over a communication channel with a transmission cost. They investigate the optimal policy for minimizing the sum of transmission cost and the inaccuracy of the state information at the monitor. It turns out that the optimal policies also have a threshold-based structure. 
Note that in their model, they assume that the source is governed by a random walk process and there are no users, which is different from ours.
A similar source monitoring problem is considered in  \cite{tripathi2021online}, where the goal is to minimize the sum of transmission costs and an unknown, time-varying penalty function of the AoI. They consider both single-source and multi-source scenarios and propose online learning algorithms with provable regret.
In \cite{9488746}, the authors consider a source-monitor pair with stochastic arrival of updates at the source. The source pays a transmission cost to send the update, and its goal is to minimize the weighted sum of AoI and transmission costs. Under the assumption that the update arrival process is Poisson, they propose an optimal threshold-based policy.
Their work differs from ours in their continuous-time setting and no user involvement.

Along this line, researchers have also considered AoI minimization with constraints (see a survey in \cite{9380899}). 
The considered constraints can be viewed as a special type of update cost.
For example, in wireless networks, the update of data consumes wireless channel resources.
Therefore, the number of packets that can be transmitted depends on the interference model \cite{lu2018age}.
Similarly, for caching services, the cache server can only update certain contents at a time due to the capacity constraint  \cite{yates2017age, zhong2018two,bastopcu2021cache,tang2021cache}.
Another example is the energy constraint \cite{8422086,wu2018optimal, 8437573, 9080062}, which is common in  energy-constrained IoT systems.
In these models, the update cost is usually imposed as a constraint of the optimization problem.

While the tradeoff between AoI and costs has been studied, most of them fall into the category where the costs primarily come from service (e.g., CPU cycle and storage) and/or communication (e.g., channel resources, delay, and energy consumption).
In \cite{fountoulakis2020optimal}, the authors consider a system where multiple devices can sample and transmit (or retransmit) updates to one receiver via unstable wireless channels, with each sampling and transmission coming with a sampling cost and a transmission cost, respectively. The objective is to minimize the sum of the expected total sampling costs and transmission costs under the expected AoI constraints. 
A similar problem is also considered in \cite{zhou2019joint}, where the objective is to minimize the expected AoI under the expected energy cost constraint, which is the sum of sampling costs and transmission costs. 
Slightly different from \cite{zhou2019joint}, the work of \cite{huang2020age} studies the problem of minimizing the sum of expected AoI and the expected energy cost under the expected transmission cost constraint. 
We note that those energy costs in \cite{fountoulakis2020optimal,zhou2019joint,huang2020age} are similar to the update cost in our work (especially regarding their mathematical formulation), 
but the origins of costs are slightly different.  
In the applications that we consider, the responses to users’ requests (e.g., gas prices) are usually small and have negligible processing time, but retrieving the data from the information source often requires certain resources (e.g., monetary payment) and introduces update costs. 
More importantly, our work differs from those works in that we emphasize users' perspective and focus on user-perceived data freshness.  In our work, users can proactively query the server to obtain the data (i.e., the ``Pull'' model \cite{sangglobecom17,li2020waiting}), and our primal concern is to optimize the data freshness perceived by the users (which is a penalty function of AoI) rather than at the server. 
The considered pull model and the concern of user-perceived data freshness bring new challenges to the server: upon users' requests, how to balance the tradeoff between the freshness of the data perceived by users and the update costs?

\section{System Model and Problem Formulation}
\label{sec:model}  
 
We consider a discrete-time information-update system that consists of an information source, a service provider (or server for short), and multiple users (see Fig.~\ref{fig:model}). 
The server can communicate with the information source and update its data with the latest information. The users need to query the server to obtain the data.

We consider an aggregated arrival process\footnote{This is because in the applications we consider, the requested information by each user is the same (e.g., in the Gasbuddy application, users who live close by are often interested in the gas prices in the same area), so we can aggregate their requests together.} formed by the
requests from all the users (which will be assumed as Bernoulli
process in Section~\ref{sec:monotone_optimal_policies} for further analysis). 
The requests arrive at the beginning of the time-slot, and the server replies to the requests with the most recently updated data at the end of a time-slot.
We use the metric \textit{Age-of-Information (AoI)} to measure the freshness of data, which is defined as the time elapsed since the most recent update. 
For ease of exhibition, we assume that the AoI drops to $0$ after the update at the end of a time-slot\footnote{Some work also assumes that the AoI drops to $1$ \cite{tripathi2019age,tripathi2021online}. We assume that the AoI drops to $0$ to make the  discussion concise and clear.}. 
The evolution of ${\Delta}(t)$ is as follows:
\begin{equation}
{\Delta}(t)=\left\{\begin{array}{ll}
{\Delta}(t-1)+1, & \text { if } {u}(t)=0; \\
0, & \text { if } {u}(t)=1,
\end{array}\right.
\end{equation}
where ${u}(t)$ indicates whether the server updates the data at time-slot $t$. 
We assume that the time-slot is indexed from $1$ and the initial AoI also equals $1$, i.e., ${\Delta} (1) = 1$.
Let $u_i$ denote the $i$-th update time. Then, an update policy $\pi$ can be denoted by the update times:  $\pi  \triangleq \{ u_i^\pi \} _{i = 1}^\infty $. 
An illustration of a typical AoI evolution is shown in Fig.~\ref{fig:AoI_evolution}. 
To reflect the dissatisfaction level of the users when they receive stale data, we also introduce a \textit{staleness cost} for each response of the server.
Specifically, the staleness cost is defined as a penalty function $f(\Delta)$ of the AoI $\Delta$, where the function $f:[0,\infty ) \mapsto [0,\infty )$ is assumed to be measurable, non-negative, and non-decreasing. 
For simplicity, we let $f(0) = 0$.

At the beginning of each time-slot, 
the server can decide whether to update the data or not. If it does, it needs to pay a constant update cost $p$ and receives the latest data from the information source at the end of time-slot. 
To avoid the staleness cost, the server can first update the data  and then reply to the request with the latest data.
Let $r_j$ be the arrival time of the $j$-th request. 
After the server receives the request at $r_j$, if the server chooses to update the data before replying to the request, its AoI drops to $0$ after the update
and its staleness cost becomes $f(0) = 0$.
In such a case, the server needs to pay an update cost $p$ though.
Otherwise, if the server does not update and replies with the current local data, the server needs to pay a staleness cost $f(\Delta ({r_j}))$. 

\begin{figure}[t]
\centering
\includegraphics[width=0.35\textwidth]{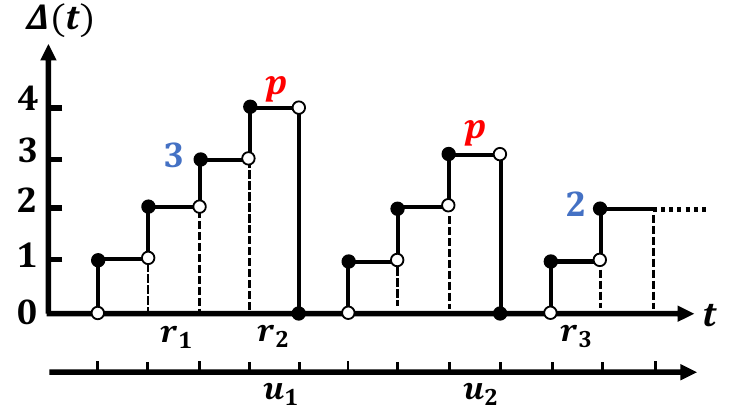}
\caption{An illustration of the AoI evolution at the server.
There are two updates (in time-slots $u_1$ and $u_2$) during the process of serving three requests (in time-slots $r_1$, $r_2$, and $r_3$). 
}
\vspace{-0.3cm}
\label{fig:AoI_evolution}
\end{figure}

Assume that the number of updates during the process of serving $N$ requests under policy $\pi$ is $U^\pi(N)$, i.e.,
\vspace{-0.05cm}
\begin{equation} 
    {U^\pi }(N) \triangleq \max \{ i|{u_i^\pi } \le {r_N}\}.
\vspace{-0.05cm}
\end{equation}
Then, the total cost of serving $N$ requests, which is the sum of update costs and the staleness costs, is defined as
\vspace{-0.05cm}
\begin{equation}
    {C^\pi }(N) \triangleq \sum\limits_{j = 1}^N {f(\Delta ({r_j}))}  + p{U^\pi }(N).
    \vspace{-0.05cm}
\end{equation}
The objective is to find an update policy $\pi$ that minimizes \textit{the long-term average expected cost per request (or average cost for short)}, which is defined as
\begin{equation}
\label{eq:objective}
    {\bar{C}^\pi } \triangleq \mathop {\lim }\limits_{N \to \infty } \frac{\mathbb{E}[{C^\pi }(N)]} {N}, 
\end{equation}
where the expectation is taken over the randomness in the arrival process and the update policy. 
Here, we assume that the limit of average cost under policy $\pi$ exists.
We focus on the set of online policies, denoted by $\Pi$, under which the information available at time $t$ for making update decisions includes the update history, the arrival times of requests that arrive until $t$, and the update cost $p$. Then, we can formulate the following optimization problem:
\begin{equation}
\label{eq:opti_problem}
    \mathop {\min }\limits_{\pi  \in \Pi } {\bar{C}^\pi }. 
\end{equation}

\section{Guidelines for Algorithm Design} 
\label{sec:algDesign}
In this section, we provide two useful guidelines for the design of efficient update policies.
Through a sample-path-dominance argument, 
we show that policies following these guidelines can achieve a lower total cost than those that do not. Therefore, we can reduce the search space of problem~\eqref{eq:opti_problem} to a certain class of online policies.

\subsection{Reactive Policies}
\label{sec:algDesign_reactive}
In this subsection, we present our first guideline for the design of update policies.
As described in Section~\ref{sec:model}, the server can update the data at any time.
However, we show that to achieve a lower total cost, it is sufficient for the update policy to just consider updating the data immediately upon receiving a new request.
We call such policies \textit{Reactive Policies} as the server does not need to update the data when there is no request.
We use $\Pi^R$ to denote the set of reactive policies:
\begin{equation}
 \Pi^R \triangleq \{\pi  \in \Pi \;|\;u_k^\pi  \in \{ {r_j}\} _{j = 1}^\infty ~\text{for all}~ k\}.
\end{equation}
Next, we show that restricting to reactive policies does not incur any performance loss.

\begin{lemma}
\label{lemma:reactive_policy_sufficient}
For any policy $\pi \in \Pi$, there exists a reactive policy $\pi^{\prime} \in \Pi^{R}$ that achieves an average cost no larger than that of policy $\pi$, i.e., $\bar{C}^{\pi^{\prime}} \leq \bar{C}^{\pi}$. 
\end{lemma}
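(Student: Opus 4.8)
The plan is to construct the reactive policy $\pi'$ explicitly from $\pi$ and then prove a pathwise (per sample-path) dominance of the total cost, which immediately yields the claim after taking expectations and the limit. First I would fix an arbitrary realization of the request arrival process with request times $r_1 < r_2 < \cdots$, and partition the time axis into the inter-request intervals $I_j \triangleq \{r_{j-1}+1, \ldots, r_j\}$ for $j \ge 1$ (with the convention $r_0 \triangleq 0$), so that every time-slot lies in exactly one $I_j$ and $r_j$ is the last slot of $I_j$. I define $\pi'$ by the rule that it updates at $r_j$ if and only if $\pi$ performs at least one update somewhere in $I_j$. By construction the update times of $\pi'$ are a subset of $\{r_j\}_{j=1}^\infty$, so $\pi' \in \Pi^R$; moreover $\pi'$ is online, since whether $\pi$ updates within $I_j$ is determined by the information available by slot $r_j$, and thus $\pi'$ can decide its action at $r_j$ by simulating $\pi$.

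Next I would show that $C^{\pi'}(N) \le C^\pi(N)$ for every $N$ on this sample path by bounding the two cost terms separately. For the update cost, $\pi'$ makes at most one update in each interval, whereas every interval counted by $\pi'$ contains at least one $\pi$-update; since $\{1, \ldots, r_N\} = \bigcup_{j=1}^N I_j$, this gives $U^{\pi'}(N) \le U^\pi(N)$. For the staleness cost, the key claim is that for every request $r_j$ the most recent update of $\pi'$ in $\{1, \ldots, r_j\}$ occurs no earlier than that of $\pi$, whence $\Delta^{\pi'}(r_j) \le \Delta^\pi(r_j)$ and, by monotonicity of $f$, $f(\Delta^{\pi'}(r_j)) \le f(\Delta^\pi(r_j))$. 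To establish the claim I would split into two cases. If $\pi$ updates somewhere in $I_j$, then $\pi'$ updates exactly at $r_j$, so $\Delta^{\pi'}(r_j)=0$ and the inequality is immediate. If $\pi$ has no update in $I_j$, then its last update before $r_j$, say at time $\tau$, lies in some earlier interval $I_{j''}$ with $j'' \le j-1$, hence $\tau \le r_{j''}$; since $\pi$ has an update in $I_{j''}$, policy $\pi'$ updates at $r_{j''} \ge \tau$, so the most recent update of $\pi'$ at or before $r_j$ is no earlier than $\tau$. Summing over $j = 1, \ldots, N$ and adding the update-cost bound yields the pathwise dominance $C^{\pi'}(N) \le C^\pi(N)$.

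Finally I would take expectations over the arrival randomness, divide by $N$, and let $N \to \infty$. Since $\mathbb{E}[C^{\pi'}(N)] \le \mathbb{E}[C^\pi(N)]$ holds for all $N$, the inequality passes to the limit (using $\limsup$ if one does not wish to presuppose the existence of the limit for $\pi'$), giving $\bar{C}^{\pi'} \le \bar{C}^\pi$.

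I expect the main obstacle to be the staleness step, specifically the bookkeeping needed to prove the ``most recent update no earlier'' claim cleanly: one must track, for each request, where $\pi$'s last update sits relative to the interval partition and verify that collapsing all of $\pi$'s updates within an interval onto its right endpoint $r_j$ never makes the data staler at any request while never increasing the number of updates. The update-count comparison and the final limiting step are routine by comparison.
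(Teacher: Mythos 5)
Your proof is correct and takes essentially the same route as the paper: both construct $\pi'$ by collapsing each of $\pi$'s updates onto the next request arrival time and then establish a sample-path dominance of the total cost before passing to expectations and the limit. The only difference is bookkeeping --- the paper argues inductively over successive segments $[1,u_k)$ and $[u_k,r_l]$ of the sample path, whereas you bound the update count ($U^{\pi'}(N)\le U^{\pi}(N)$ via disjointness of the inter-request intervals) and the per-request staleness ($\Delta^{\pi'}(r_j)\le\Delta^{\pi}(r_j)$ via the most-recent-update comparison) separately, which is an equally valid and arguably cleaner organization of the same idea.
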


We provide the detailed proof in Appendix~\ref{appendix:reactive_policy_proof} and explain the key ideas as follows.
Intuitively, postponing the update until a request arrives does not increase the total cost because the total number of updates remains the same, but doing this achieves a lower staleness cost since the update time is closer to the request arrival time. 
Therefore, reactive policies can achieve a smaller total cost than those non-reactive policies.

Lemma~\ref{lemma:reactive_policy_sufficient} implies that the search space of Problem \eqref{eq:opti_problem}  can be further reduced from the set of online policies $\Pi$ to the set of reactive policies $\Pi^R$.
Now, consider any reactive policy $\pi$. Upon receiving a request, the server needs to decide whether to update the data or not before responding.
Therefore, we use $I_j^\pi $ to denote the decision made by the server upon receiving the $j$-th request for the data at time $r_{j}$:
\begin{equation}\notag
I_{j}^\pi \triangleq \left\{
\begin{aligned}
1, & \quad \text{if the server updates the data at time}~ r_{j};\\
0, & \quad \text{otherwise}.
\end{aligned}
\right.
\end{equation}

\subsection{Capped Reactive Policies}
In this subsection, we present the second guideline for the design of update policies.
In Section \ref{sec:algDesign_reactive}, we show that the reactive policies achieve a smaller or equal average cost by postponing the update until a request arrives. 
In fact, after the server receives the request, if the staleness cost is no smaller than the update cost, it is better for the server to update the data to avoid a larger staleness cost.
Doing so not only leads to a smaller cost for this request but also benefits the next few requests. 
We use $\Pi^{R+}$ to denote the set of reactive policies that satisfy the above guideline:
\begin{align}
\Pi^{R+} \triangleq & ~\{\pi\in\Pi^{R} ~|~  {I}_j^\pi = 1 ~\text{for all}~ j ~\text{when}~ f(\Delta(r_j)) \geq p\}.\notag
\end{align}
That is, for any policy $\pi  \in {\Pi ^{R + }}$, it must update the data when the staleness cost is no smaller than the update cost; otherwise, it can choose to update the data or not.
We call such policies \textit{Capped Reactive Policies} because the staleness cost of such policies is capped by the update cost.
Fig.~\ref{fig:policies_relation} illustrates the relationship between $\Pi^{R}$ and $\Pi^{R+}$.
Note that the condition ${f}({\Delta}({r_{j}})) \ge {p}$ can also be expressed as ${\Delta}({r_{j}}) \ge \Delta^*$, where  $\Delta^*$ is
the smallest AoI such that the staleness cost is no smaller than the update cost, i.e., 
\begin{equation}
\label{eq:smallest_s}
    {\Delta^*} \triangleq \min \{ \Delta|f(\Delta) \ge p\}.
\end{equation}
In the following, we show that restricting to capped reactive policies does not incur any performance loss.

\begin{lemma}
\label{lemma:r+}
For any policy $\pi \in \Pi^{R}$, there exists a capped reactive policy $\pi^{\prime} \in \Pi^{R+}$ that achieves an average cost no larger than that of policy $\pi$, i.e., $\bar{C}^{\pi^{\prime}}\leq \bar{C}^{\pi}$. 
\end{lemma}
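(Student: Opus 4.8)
The plan is to prove the statement by an explicit sample-path coupling, mirroring the dominance argument used for Lemma~\ref{lemma:reactive_policy_sufficient}. Given a reactive policy $\pi \in \Pi^{R}$, I would construct a companion policy $\pi'$ on the same arrival sample path (and, if $\pi$ is randomized, with the same realized randomness) by the following rule: upon the $k$-th request, $\pi'$ updates if and only if $\pi$ updates at $r_k$ (i.e. $I_k^{\pi}=1$) or the current age under $\pi'$ has reached the cap, $\Delta_{\pi'}(r_k)\ge \Delta^*$. This construction is causal, since $\Delta_{\pi'}(r_k)$ depends only on the updates $\pi'$ has already committed to before $r_k$ together with the given decisions of $\pi$; hence $\pi'$ is a well-defined online reactive policy. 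Moreover, $\pi'$ updates whenever its age meets or exceeds $\Delta^*$ (equivalently whenever $f(\Delta_{\pi'}(r_k))\ge p$), so $\pi'\in\Pi^{R+}$ directly from the definition of capped reactive policies.

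The first key step is a pointwise age-domination claim: $\Delta_{\pi'}(t)\le \Delta_{\pi}(t)$ for every $t$. This holds because, by construction, the set of time-slots in which $\pi'$ updates contains the set in which $\pi$ updates; an additional reset to $0$ can only shorten the time since the last update, so a larger set of update epochs yields a pointwise smaller age. I would establish this by a short induction over requests, using that both trajectories start from the common initial age $\Delta(1)=1$ and that every reset epoch of $\pi$ is also a reset epoch of $\pi'$.

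With age domination in hand, I would compare the two policies request by request, writing the per-request cost of a reactive policy as $p$ when it updates and as $f(\cdot)$ evaluated at the current age when it does not. There are three cases. If $\pi$ updates at $r_k$, then so does $\pi'$ and both pay $p$. If neither updates, both pay only a staleness cost, and $\pi'$ pays no more than $\pi$ by monotonicity of $f$ together with age domination. The remaining case—the crux—is when $\pi'$ makes an \emph{extra} update that $\pi$ does not; this occurs exactly when $\Delta_{\pi'}(r_k)\ge\Delta^*$ while $I_k^{\pi}=0$. Here $\pi'$ pays $p$, whereas $\pi$ pays $f(\Delta_{\pi}(r_k))$; since $\Delta_{\pi}(r_k)\ge \Delta_{\pi'}(r_k)\ge\Delta^*$ and $f(\Delta^*)\ge p$ by the definition of $\Delta^*$ in~\eqref{eq:smallest_s}, monotonicity gives $f(\Delta_{\pi}(r_k))\ge p$, so $\pi'$ is again no worse. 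This case is where the cap at $\Delta^*$ is essential: it guarantees that every additional update forced on $\pi'$ is already paid for by a staleness cost that $\pi$ would otherwise have incurred.

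Summing the per-request inequality yields $C^{\pi'}(N)\le C^{\pi}(N)$ on every sample path for all $N$; taking expectations and then the limit in~\eqref{eq:objective} gives $\bar{C}^{\pi'}\le\bar{C}^{\pi}$, which is the claim. I expect the main obstacle to be the extra-update case combined with making the pointwise age-domination rigorous: one must verify that defining $\pi'$ to piggyback on $\pi$'s updates (rather than matching $\pi$'s decisions against $\pi$'s own age) truly produces a superset of update epochs, which is exactly what makes both the age-domination and the subsequent case analysis go through.
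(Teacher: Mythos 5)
Your proposal is correct, and your constructed policy is essentially the same as the paper's: the paper builds $\pi'$ by (1) copying every update of $\pi$ and (2) forcing an update at any request whose AoI is at least $\Delta^*$, which is exactly your piggyback-plus-cap rule. Where you genuinely diverge is in how the dominance is argued. The paper proceeds by induction over intervals: it locates the first request $r_l$ at which $\pi$ violates the cap, argues the two trajectories coincide on $[1,r_l)$, compares costs on $[1,u_k]$ (where $u_k$ is $\pi$'s next update after $r_l$), and restarts the induction at $u_k+1$ where the two AoI trajectories re-synchronize. You instead establish a single global invariant --- pointwise age domination $\Delta_{\pi'}(t)\le\Delta_{\pi}(t)$, which follows from the update epochs of $\pi'$ being a superset of those of $\pi$ --- and then compare costs request by request via a three-case analysis, with the cap $f(\Delta^*)\ge p$ paying for every extra update in the crux case. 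Your organization is actually tighter on one point: the paper's interval argument asserts that ``there is no update under both policies during $(r_l,u_k)$,'' but if that interval is long the cap rule can force $\pi'$ to update again inside it (and rule 2 can fire at every request there), so the paper's stated justification is incomplete, even though the conclusion survives precisely because of the inequality you isolate in your extra-update case. The paper's decomposition buys a simple mental picture (trajectories identical, then one deviation, then re-synchronization at a common reset), while your invariant-plus-per-request comparison buys robustness: it needs no claim about where updates do or do not occur between discrepancies, and it sums cleanly to $C^{\pi'}(N)\le C^{\pi}(N)$ on every sample path before taking expectations and limits.
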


We provide the detailed proof in Appendix~\ref{appendix:capped_reactive_policy_proof} and explain the key ideas in the following.
Upon receiving a request, policy $\pi^{\prime}$ performs an update if the staleness cost is no smaller than the update cost. Compared to policy $\pi$ that does not make such an update, doing so incurs an update cost for policy $\pi^{\prime}$, but it avoids a larger staleness cost. Besides, it also reduces the staleness cost for the requests that arrive thereafter. Therefore, policy $\pi^{\prime}$ can achieve a total cost no larger than that of policy~$\pi$.

By Lemma~\ref{lemma:r+}, we can further reduce the search space to the class of capped reactive policies $\Pi^{R+}$.
Therefore, Problem~\eqref{eq:opti_problem} can be further reduced to the following: 
\begin{equation}\label{eq:optformula3}
\min_{\pi \in \Pi^{R+}}~ \bar{C}^{\pi}.
\end{equation}

\begin{figure}
    \centering
    \includegraphics[scale=0.6]{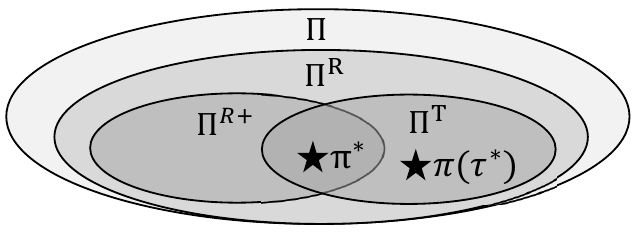}
    \caption{The relationship between online policies $\Pi$, reactive policies $\Pi^{R}$, capped reactive policies $\Pi^{R+}$, threshold-based policies $\Pi^{T}$, an overall optimal policy $\pi^*$ (see Theorem~\ref{thm:exi_opt_avg_mdp}), and an optimal threshold-based policy ${\pi ({\tau ^*})}$ (see Corollary~\ref{lemma:opt_threshold}). Note that policy ${\pi ({\tau ^*})}$ is also an overall optimal policy and could be the same as policy $\pi^*$ in some cases.}
    \label{fig:policies_relation}
    \vspace{-0.3cm}
\end{figure}

Till this point, we do not make any assumption on the request arrival process.
The aforementioned guidelines can be applied to general request arrival processes.
In the following, unless otherwise specified, we focus on the capped reactive policies $\Pi^{R+}$. 
\emph{This capped property plays an important role in characterizing the threshold-based structure of an optimal policy for solving the MDP formulation in Section~\ref{sec:monotone_optimal_policies}.}

\section{MDP Formulation and Threshold structure }
\label{sec:monotone_optimal_policies}

Under a capped reactive policy, the server makes update decisions upon receiving requests and pays a cost (an update cost or a staleness cost) based on the decision. 
Naturally, this sequential decision process can be modeled as an MDP.
In this section, 
we assume that the request arrival process is Bernoulli\footnote{The motivation for this assumption is that the probabilistic analysis shows that the arrivals in certain arrival-type processes (i.e., users' arrival to a bank in each second, and job arrivals to the server at each time slot) are independent random variables, which can be modeled as independent Bernoulli trials \cite{bertsekas2008introduction, gallager2013stochastic}. 
In addition, the Bernoulli arrival process is actually a discrete-time analog of the Poisson arrival process, which is widely used in modeling the arrival process in real life \cite{bertsekas2008introduction, gallager2013stochastic, harchol2013performance}. 
Therefore, it also makes sense to model the arrival process in the discrete-time system as a Bernoulli process.
This assumption is also widely adopted in recent AoI-related work \cite{akar2021discrete,tripathi2017age,tripathi2019age,kosta2021age}.} with rate $\lambda \in (0,1)$, denoted by Bernoulli($\lambda$), i.e., the probability that a request arrives in a time-slot is $\lambda$. 
Then, we reformulate Problem~\eqref{eq:optformula3} as a discrete-time MDP and show that there exists a stationary threshold-based policy that solves the Bellman equation of the considered MDP and is thus optimal among all online policies.

The MDP formulation has the following key components:
$\{\mathcal{N}, \mathcal{S}, \mathcal{A}_s, p(\cdot \mid s, a), c(s, a): n \in \mathcal{N}, s \in \mathcal{S}, a \in \mathcal{A}_s\}$, where
\begin{enumerate}
	\item  
	$\mathcal{N} = \{ 1,2, \cdots \} $
	is the set of decision epochs. 
	Under a capped reactive policy, the $n$-th decision epoch is the time-slot when the $n$-th request arrives.
	\item $\mathcal{S}=\{ 0,1,\cdots\}$ 
	is the set of system states (which are all possible values of the AoI). We use $s_n$ to denote the AoI value when the $n$-th request arrives.  
	\item 
	$\mathcal{A}_s$ is the set of actions when the system state is $s$. 
	Let $a \in {\mathcal{A}_s}$ denote the possible actions, where $a = 1$ means updating the data and $a = 0$ means not. 
	Under a capped reactive policy, there are two sets of actions depending on the state $s$: when the staleness cost $f(s)$ is no smaller than the update cost $p$, the only available action is to update, i.e., ${\mathcal{A}_{\{ s:f(s) \ge p\} }} = \{ 1\} $;
	otherwise, the system can either update or not, i.e., ${\mathcal{A}_{\{ s:f(s) < p\} }} = \{ 0,1\} $. 
	\item The transition probability can be calculated as
\begin{equation*}
\begin{aligned}
&p(z \mid s, a)= \\
&\left\{\begin{array}{ll}
(1-\lambda)^{z-1} \lambda, & \ \  \text {if} \ z \geq 1 \text { and } a=1; \\
(1-\lambda)^{z-s-1} \lambda, & \ \ \text {if} \ z>s, f(s)<p, \text { and } a=0; \\
0, & \  \text { otherwise. }
\end{array}\right.
\end{aligned}
\end{equation*}
	That is, when the system is in state $s$, if the server updates the data, the system will enter state $z$ $(z\geq1)$ with probability ${{{(1 - \lambda )}^{z - 1}}\lambda }$ because the request arrival process follows Bernoulli($\lambda$); otherwise, if the server does not update, under a capped reactive policy, the system will enter state $z$ $(z>s)$ with  probability ${{{(1 - \lambda )}^{z - s - 1}}\lambda }$ only when the staleness cost $f(s)$ is smaller than the update cost $p$. 
	\item The cost at each decision epoch can be expressed as 
	\begin{equation}
		c(s,a) = \left\{ {\begin{array}{*{20}{ll}}
p, &  \text{if} \ a = 1;\\
f(s), &  \text{if} \  a = 0.
\end{array}} \right.
	\end{equation}
	That is, when the system is in state $s$, updating the data incurs an update cost of $p$; otherwise, there is a staleness cost of $f(s)$.
	Note that under a capped reactive policy, we always have $c(s,a) \le p$.
\end{enumerate}

The objective of the MDP is to find a stationary capped reactive stationary update policy that minimizes the long-term average expected cost, i.e., 
\begin{equation}
    \label{eq:obj_ini}
    \mathop {\min }\limits_{\pi  \in {\Pi ^{R + }}} \mathop {\lim }\limits_{N \to \infty } \frac {\mathbb{E}_\pi\left[ {\sum\limits_{n = 1}^N {c({s_n},{a_n})} }| s_1 = s \right]} {N},
\end{equation}
where $\mathbb{E}_\pi[ \cdot ]$ represents the conditional expectation, given that policy $\pi$ is employed; ${s_n}$ and ${a_n}$ are the state  and action taken at decision epoch~$n$, respectively; and $s$ is the initial state. 
We emphasize that, unlike traditional MDP formulations that mainly focus on optimization over time (where the state of the MDP should consist of two variables: one denotes the AoI value in the slot and the other denotes whether there is a request arriving in the slot or not. Obtaining the simple solutions (e.g., threshold structure) for the MDPs with multiple state variables is usually more challenging and involves more sophisticated techniques.),  by following our proposed guidelines (especially the reactive guideline), we can optimize our MDP over only users’ requests (so now our state includes just one variable: the AoI value when the request arrives). 
This allows us to reduce the state space and thus facilitate the theoretical analysis.
Note that the objective in Problem~\eqref{eq:obj_ini} is the same as that in Problem~\eqref{eq:optformula3}  except that we specify the initial state $s$ in Problem~\eqref{eq:obj_ini}. 
In other words, an optimal policy for Problem~\eqref{eq:obj_ini} is also an optimal policy for Problem~\eqref{eq:optformula3}.
Next, we show that there exists an optimal policy for Problem~\eqref{eq:obj_ini} that has a threshold-based structure, which enables us to search for an optimal policy in the class of threshold-based policies (see Section~\ref{sec:threshold_optimality}). We state this result in Theorem~\ref{thm:exi_opt_avg_mdp}.

\begin{theorem}
\label{thm:exi_opt_avg_mdp}
There exists an optimal stationary capped reactive policy $\pi^* \in {\Pi ^{R + }}$ that has a threshold-based structure.
\end{theorem}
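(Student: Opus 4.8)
The plan is to characterize an optimal policy through the average-cost optimality equation (ACOE) of the MDP and then read the threshold structure off that equation. First I would argue that there exist a constant $g$ and a bounded relative-value function $h:\mathcal{S}\to\mathbb{R}$ satisfying
\begin{equation}
g + h(s) = \min_{a\in\mathcal{A}_s}\Big\{ c(s,a) + \sum_{z} p(z\mid s,a)\, h(z) \Big\}.
\end{equation}
The justification rests on the regenerative structure of $\Pi^{R+}$: under any capped reactive policy the AoI seen at successive decision epochs is strictly increasing between updates, so a forced update occurs within at most $\Delta^*$ decision epochs (recall $\mathcal{A}_{\{s:f(s)\ge p\}}=\{1\}$). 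Hence the process regenerates at each update with a uniformly bounded expected cycle length, and together with the bounded per-stage cost $c(s,a)\le p$ this makes the chain unichain with constant optimal average cost $g$ and a bounded $h$. A standard vanishing-discount argument for average-cost MDPs with bounded costs then yields a solution to the ACOE, and the stationary policy attaining the minimum is average-cost optimal; since Lemmas~\ref{lemma:reactive_policy_sufficient} and~\ref{lemma:r+} already reduced Problem~\eqref{eq:obj_ini} to $\Pi^{R+}$, this policy is optimal among all online policies.

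Next I would isolate the two terms in the minimization. Writing $Q_1 \triangleq p + \sum_{z\ge 1}(1-\lambda)^{z-1}\lambda\, h(z)$ for the update action and $Q_0(s) \triangleq f(s) + \sum_{z>s}(1-\lambda)^{z-s-1}\lambda\, h(z)$ for the no-update action, the crucial observation is that $Q_1$ does \emph{not} depend on $s$: after an update the AoI is reset, so the post-update state has the fixed geometric law $(1-\lambda)^{z-1}\lambda$ regardless of the current state. Consequently the optimal action at state $s$ is to update exactly when $Q_0(s)\ge Q_1$, and the whole theorem reduces to showing that $Q_0(\cdot)$ is non-decreasing: then $\{s : Q_0(s)\ge Q_1\}$ is an up-set, and taking the threshold $\tau$ to be its smallest element gives a threshold-based policy. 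This is automatically consistent with the capping, since the constraint already forces updates for all $s\ge\Delta^*$, so $\tau\le\Delta^*$.

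The heart of the argument, and the step I expect to be the main obstacle, is proving that $h$ is non-decreasing; from this and the non-decreasing $f$, monotonicity of $Q_0$ is immediate. I would obtain it from the discounted problem by induction on value iteration: with $V^{(0)}_\beta\equiv 0$ and $V^{(n+1)}_\beta(s)=\min_{a\in\mathcal{A}_s}\{c(s,a)+\beta\sum_z p(z\mid s,a)V^{(n)}_\beta(z)\}$, I would show that $V^{(n)}_\beta$ non-decreasing implies $V^{(n+1)}_\beta$ non-decreasing. Three facts drive this: (i) $c(s,0)=f(s)$ is non-decreasing while $c(s,1)=p$ is constant; (ii) the post-update continuation value is state-independent, so the update term $p+\beta\sum_z(1-\lambda)^{z-1}\lambda V^{(n)}_\beta(z)$ is the same for every $s$; and (iii) the no-update transition is stochastically increasing in $s$ — coupling the geometric gap $G\ge 1$, state $s$ moves to $s+G$ and state $s+1$ to $s{+}1{+}G\ge s{+}G$ — so a non-decreasing $V^{(n)}_\beta$ makes the no-update continuation value non-decreasing as well. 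The only care needed is at the boundary $s<\Delta^*\le s+1$, where $s+1$ admits only the update action: there $V^{(n+1)}_\beta(s)$ is a minimum over actions that includes the update value, which equals $V^{(n+1)}_\beta(s+1)$, so $V^{(n+1)}_\beta(s)\le V^{(n+1)}_\beta(s+1)$ still holds. Passing $\beta\to 1$ along the vanishing-discount subsequence transfers monotonicity from $V_\beta$ to $h$, which completes the proof. The technical delicacy lies entirely in verifying the vanishing-discount conditions on the countably infinite state space and in confirming that the induced stationary threshold policy indeed attains $g$; the bounded costs and bounded regeneration cycles established in the first step are precisely what make these conditions hold.
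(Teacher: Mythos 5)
Your proposal is correct and takes essentially the same route as the paper's proof: a discounted MDP whose $n$-stage value functions are shown non-decreasing by induction (your coupling of the geometric gap is the paper's change of index $k=z-s$, and your boundary case at $\Delta^*$ is the paper's observation that the constant update value upper-bounds the minimum), followed by a vanishing-discount passage to the average-cost Bellman equation, from which the threshold is read off as the smallest $s$ where the no-update term exceeds the state-independent update term. The only difference is one of detail: where you invoke a ``standard'' vanishing-discount argument justified by bounded costs and bounded regeneration cycles, the paper carries out the diagonal-subsequence extraction and bounded-convergence steps explicitly (its Lemma~\ref{lemma:bellman_avg}), but the underlying conditions you cite are exactly the ones the paper verifies.
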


We provide the detailed proof in Appendix~\ref{appendix:theorem_1} and present an outline of the proof in the following.
First, we study a discounted MDP and derive its optimal value function.
Second, based on the optimal value function, we derive the Bellman equation of the expected total average cost and show that the Bellman equation has a threshold-based structure. Specifically, the server needs to update the data when the current AoI value (i.e., the state) is no smaller than a certain fixed threshold $s^*$ (see definition in Eq.~\eqref{eq:min_i}); otherwise, it does not.
Now consider a stationary capped reactive policy $\pi^* \in {\Pi ^{R + }}$ that makes update decisions based on threshold $s^*$. Apparently, policy $\pi^*$ minimizes the Bellman equation for any state, thus it is an overall optimal policy~\cite[Chapter V, Theorem $2.1$]{ross2014introduction}. 
The threshold structure of the optimal policy $\pi^*$ indicates that among all threshold-based policies, there is an overall optimal policy (see Fig.~\ref{fig:policies_relation}). This motivates us to search for the optimal threshold-based policy in Section~\ref{sec:threshold_optimality}. 

\textit{Remark $1$:} Our proposed guidelines (especially the capped reactive policy) play an important role in characterizing the threshold-based structure of an optimal policy. By following our guidelines, we can restrict ourselves to the policies whose cost at each decision epoch is no greater than the update cost $p$. 
This additional property enables us to characterize the monotonicity of the optimal value function of the discounted MDP and the monotonicity of the Bellman equation, and ultimately address the overall problem by finding a simple threshold-based optimal policy.

\section{Optimal Threshold-based Policy}
\label{sec:threshold_optimality}

Theorem~\ref{thm:exi_opt_avg_mdp} tells us that we can further reduce the search space from the set of capped reactive policies to the set of capped reactive threshold-based policies.
In this section, we formally define threshold-based policies and derive the closed-form expression of the average cost of the threshold-based policies. Using the closed-form expression, we can find the optimal threshold-based policy. 
Furthermore, we show that the optimal threshold-based policy is also an optimal policy among all online policies.

We begin with the definition of threshold-based policies. 
\begin{definition}[Threshold-based Policies]
A policy in $\Pi^{R}$ is called a threshold-based policy if it performs updates according to the following rule with a predetermined positive integer threshold~$\tau$: for the request arriving at time ${r_j}$, we have
\begin{equation}\notag
I_j = \left\{
\begin{aligned}
1, &\quad \Delta(r_j) \ge \tau; \\
0, &\quad \Delta(r_j) < \tau. \\
\end{aligned}
\right.
\end{equation} 
That is, the server updates the data at $r_j$ before replying if the AoI at $r_j$ is no smaller than threshold $\tau$; 
otherwise, the server simply replies with the current local data.
\end{definition}

We consider an integer threshold because the values of the AoI are integers. 
Let $\pi (\tau )$ be the threshold-based policy with threshold $\tau$, and let ${\Pi ^T}$ be the set of all threshold-based policies.
Fig.~\ref{fig:policies_relation} shows the relationship of $\Pi^{R}$, $\Pi^{R+}$, and $\Pi ^T$.

Assume that the request arrival process is Bernoulli, we can derive the closed-form expression of the average cost under any threshold-based policy. We state this result in Theorem~\ref{theorem:thresholdperformance}.

\begin{theorem}
\label{theorem:thresholdperformance}
Assume that the request arrival process is Bernoulli($\lambda$), the staleness cost function is $f(\Delta )$, and the update cost is $p$. Then, for any policy $\pi(\tau)  \in {\Pi ^T}$ with a positive integer threshold $\tau$, the average expected cost can be computed as follows:
\begin{equation}
\label{eq:exp_avg_cost}
    {\bar{C}^{\pi(\tau)} } = \frac{{\lambda \sum\limits_{t = 1}^{\tau  - 1} {f(t)}  + p}}{{\lambda (\tau  - 1) + 1}}.
\end{equation}
\end{theorem}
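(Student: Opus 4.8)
The plan is to establish the formula via a \emph{renewal-reward} argument, exploiting the fact that every update resets the AoI to $0$ and, because arrivals are Bernoulli($\lambda$) and hence memoryless, makes the subsequent evolution independent of the past. Consequently the update instants are regeneration points, and I would partition the time axis into i.i.d.\ \emph{cycles}, each spanning the slots strictly between two consecutive updates together with the update-triggering slot. Writing $R$ for the total cost accrued in one cycle and $K$ for the number of requests served in that cycle, the long-run average cost per request equals $\mathbb{E}[R]/\mathbb{E}[K]$; the bulk of the work then reduces to evaluating these two expectations.

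Next I would describe the cycle explicitly. Immediately after an update the AoI takes the values $1,2,3,\dots$ on successive slots. As long as the AoI equals some $t\in\{1,\dots,\tau-1\}$, policy $\pi(\tau)$ never updates, so a request in such a slot (which occurs independently with probability $\lambda$) contributes a staleness cost $f(t)$ and nothing more. Once the AoI reaches $\tau$, the very next request triggers an update, costing $p$, and terminates the cycle; by the memoryless property the waiting time for that request is geometric, and no request arrives in the intervening ``active'' slots. Hence each of the $\tau-1$ ``passive'' slots carries a request (and a cost $f(t)$) with probability $\lambda$, while the active phase contributes exactly one request and exactly the update cost $p$. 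This yields $\mathbb{E}[R]=\lambda\sum_{t=1}^{\tau-1}f(t)+p$ and $\mathbb{E}[K]=\lambda(\tau-1)+1$, both finite, and substituting into $\mathbb{E}[R]/\mathbb{E}[K]$ gives exactly Eq.~\eqref{eq:exp_avg_cost}. As a sanity check, $\tau=1$ collapses the passive phase and gives $\bar{C}^{\pi(\tau)}=p$ (every request updates), and $\mathbb{E}[K]/\mathbb{E}[L]=\lambda$ recovers the arrival rate, where $L$ is the cycle length in slots.

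The step I expect to be the main obstacle is the careful bookkeeping that makes the renewal-reward theorem applicable to the \emph{per-request} average rather than the per-slot average, together with the precise accounting of which slots can incur a staleness cost. In particular I must argue that the pairs $(R_i,K_i)$ across cycles are genuinely i.i.d.\ (so that $\tfrac{1}{n}\sum_i R_i\to\mathbb{E}[R]$ and $\tfrac{1}{n}\sum_i K_i\to\mathbb{E}[K]$ by the strong law, whence $\mathbb{E}[C^\pi(N)]/N\to\mathbb{E}[R]/\mathbb{E}[K]$ after controlling the partial final cycle and the transient initial cycle), and that no cost is charged in the active slots preceding the triggering request because, by definition, they contain no request. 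Everything else is a routine evaluation of the two expectations above.
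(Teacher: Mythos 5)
Your proposal is correct, and it shares the paper's overall skeleton: the paper likewise treats the update instants as renewal points and writes the average cost as $\bar{C}^{\pi(\tau)} = \mathbb{E}[C_k]/\mathbb{E}[N_k]$, where $C_k$ and $N_k$ are the cost and the number of requests in the $k$-th update interval, and its computation of the denominator $\mathbb{E}[N_k]=\lambda(\tau-1)+1$ is identical to yours. Where you genuinely diverge is in the numerator. The paper evaluates the expected staleness cost by summing over \emph{request indices} $n$: it writes the arrival time of the $n$-th request in the cycle via the negative binomial distribution $p_n(t)=\binom{t-1}{n-1}\lambda^n(1-\lambda)^{t-n}$, interchanges the two infinite/finite sums, and collapses the inner sum with the binomial theorem to obtain $\mathbb{E}[C_k]=p+\lambda\sum_{t=1}^{\tau-1}f(t)$. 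You instead sum over \emph{slots}: each AoI value $t\in\{1,\dots,\tau-1\}$ occupies exactly one slot per cycle, a request lands in that slot independently with probability $\lambda$, and linearity of expectation gives $\lambda\sum_{t=1}^{\tau-1}f(t)$ in one line, with the triggering request contributing exactly $p$ and zero staleness cost. Your route buys a substantially shorter and more transparent computation that avoids the combinatorial identity entirely; it also makes the independence structure doing the work explicit. Conversely, the paper's request-indexed computation is the more mechanical one and requires no re-parametrization of the cost process, but most of its length (the negative binomial setup and the summation interchange, justified there by finiteness) is exactly what your slot-based accounting makes unnecessary. One further point in your favor: you flag that passing from $\mathbb{E}[C^\pi(N)]/N$ to $\mathbb{E}[R]/\mathbb{E}[K]$ needs the strong law applied to the i.i.d.\ pairs $(R_i,K_i)$ together with control of the first and the partial last cycle (noting that $\Delta(1)=1$ makes the first cycle distributionally identical to the rest), whereas the paper compresses this into a one-sentence appeal to ergodicity; writing that step out as you propose would make the argument fully self-contained.
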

We provide the detailed proof in Appendix~\ref{appendix:theorem_2} and present an outline of the proof in the following. 
Since the request arrival process is Bernoulli, under a threshold-based update policy, the lengths of update intervals are independent and identically distributed  (\textit{i.i.d.}). Thus, the update process is a renewal process. 
Due to the ergodicity of the process,  the expected average cost can be computed as ${\bar{C}^{\pi(\tau)} } = \mathbb{E}[{C_k}]/\mathbb{E}[{N_k}]$, where $\mathbb{E}[{C_k}]$ and $\mathbb{E}[{N_k}]$ are the expected total cost and the expected number of requests in the $k$-th update interval, respectively. 
 In addition, by exploiting the properties of Bernoulli arrival process, we can further derive the closed-form expressions of $\mathbb{E}[{C_k}]$ and $\mathbb{E}[{N_k}]$, which are shown in the numerator and denominator of Eq.~\eqref{eq:exp_avg_cost}, respectively.

With the result in Theorem~\ref{theorem:thresholdperformance}, we can also easily compute the optimal threshold ${\tau ^*}$. 
In fact, this optimal threshold-based policy ${\pi(\tau^*)}$ is also an overall optimal policy, which is shown in Corollary~\ref{lemma:opt_threshold}.

\begin{corollary}
\label{lemma:opt_threshold}
Assume that the request arrival process is Bernoulli($\lambda$). Then, the threshold of the optimal threshold-based policy ${\pi(\tau^*)} \in  {\Pi ^T}$ is the following:
\begin{equation}
    {\tau ^*} =  \left\{ {\mathop {\arg \min }\limits_{\tau  \in \{ \left\lfloor {\tau '} \right\rfloor ,\left\lceil {\tau '} \right\rceil \} } {\bar{C}^{\pi (\tau )}} } \right\},
    \label{eq:opt_threshold}
\end{equation}
where $\tau'$ is the real number that achieves the smallest expected average cost (i.e., $\tau ' = \arg \min_{\tau>0} {\bar{C}^{\pi(\tau)} }$). 
Furthermore, the optimal threshold-based policy ${\pi(\tau^*)} \in  {\Pi ^T}$ is also an overall optimal policy among all online policies.
\end{corollary}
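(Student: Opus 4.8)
The plan is to prove the two claims separately: first that the cost-minimizing integer threshold equals one of $\lfloor\tau'\rfloor$ and $\lceil\tau'\rceil$, and second that the resulting policy $\pi(\tau^*)$ is optimal among all online policies. The first claim reduces to showing that the closed-form cost $\bar{C}^{\pi(\tau)}$ of Theorem~\ref{theorem:thresholdperformance}, viewed as a sequence indexed by the integer threshold $\tau$, is unimodal (quasi-convex); the second follows by combining this with the existence result of Theorem~\ref{thm:exi_opt_avg_mdp}.

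For the unimodality, I would write $\bar{C}^{\pi(\tau)} = A(\tau)/B(\tau)$ with $A(\tau) \triangleq \lambda\sum_{t=1}^{\tau-1}f(t)+p$ and $B(\tau)\triangleq\lambda(\tau-1)+1$, both strictly positive. The sign of the consecutive difference $\bar{C}^{\pi(\tau+1)}-\bar{C}^{\pi(\tau)}$ equals the sign of the cross-difference $A(\tau+1)B(\tau)-A(\tau)B(\tau+1)$. Using $A(\tau+1)=A(\tau)+\lambda f(\tau)$ and $B(\tau+1)=B(\tau)+\lambda$, this cross-difference collapses to $\lambda\,[f(\tau)B(\tau)-A(\tau)]$, so setting $h(\tau)\triangleq f(\tau)B(\tau)-A(\tau)$ the difference is nonnegative exactly when $f(\tau)\ge\bar{C}^{\pi(\tau)}$, i.e. when the marginal staleness cost at the threshold exceeds the current average cost. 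I would then show $h$ is non-decreasing via the telescoping identity $h(\tau+1)-h(\tau)=(\lambda\tau+1)\,[f(\tau+1)-f(\tau)]\ge 0$, where nonnegativity is precisely where the non-decreasing assumption on $f$ from Section~\ref{sec:model} is used. Since $h$ crosses zero at most once, the difference $\bar{C}^{\pi(\tau+1)}-\bar{C}^{\pi(\tau)}$ switches sign at most once, from nonpositive to nonnegative, so the sequence first weakly decreases and then weakly increases and is therefore unimodal.

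Given unimodality, extending $\bar{C}^{\pi(\tau)}$ to a continuous quasi-convex function agreeing with it at the integers (any monotone interpolation of the partial sums $\sum_{t=1}^{\tau-1}f(t)$ suffices) makes the real minimizer $\tau'$ well-defined, and the integer minimizer must then lie at $\lfloor\tau'\rfloor$ or $\lceil\tau'\rceil$; comparing the two candidates yields $\tau^*$ as in Eq.~\eqref{eq:opt_threshold}, which is the global integer minimizer of $\bar{C}^{\pi(\tau)}$. For the second claim, Theorem~\ref{thm:exi_opt_avg_mdp} guarantees that some threshold-based policy $\pi(s^*)$ is optimal among all online policies. Because $\tau^*$ minimizes $\bar{C}^{\pi(\tau)}$ over all positive integer thresholds, $\bar{C}^{\pi(\tau^*)}\le\bar{C}^{\pi(s^*)}$; but $\bar{C}^{\pi(s^*)}$ equals the optimal online cost while $\bar{C}^{\pi(\tau^*)}$ cannot be strictly smaller (as $\pi(\tau^*)\in\Pi$), forcing equality, so $\pi(\tau^*)$ is itself an overall optimal policy.

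I expect the main obstacle to be the unimodality argument, specifically reducing the sign of the consecutive cost difference to the monotone quantity $h(\tau)$ and verifying $h(\tau+1)-h(\tau)=(\lambda\tau+1)\,[f(\tau+1)-f(\tau)]$; once $h$ is shown to be non-decreasing, the floor/ceil characterization and the overall-optimality conclusion follow in a straightforward manner. The only additional care needed is in specifying the continuous extension so that quasi-convexity and agreement at the integers are both preserved, which is a routine matter for a monotone interpolant.
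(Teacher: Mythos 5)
Your closing paragraph is precisely the paper's entire proof of this corollary: the paper likewise takes the overall optimal capped reactive threshold-based policy $\pi^*\in\Pi^{R+}\cap\Pi^T$ from Theorem~\ref{thm:exi_opt_avg_mdp}, notes $\bar{C}^{\pi(\tau^*)}\le\bar{C}^{\pi^*}$ because $\pi(\tau^*)$ minimizes over $\Pi^T$, notes the reverse inequality because $\pi^*$ is overall optimal, and concludes equality. What the paper never proves is the floor/ceiling characterization in Eq.~\eqref{eq:opt_threshold}; it is asserted and then used in Examples~1 and~2. Your unimodality analysis supplies exactly that missing piece, and its core computation is correct: the sign of $\bar{C}^{\pi(\tau+1)}-\bar{C}^{\pi(\tau)}$ equals the sign of your $h(\tau)=f(\tau)B(\tau)-A(\tau)$, and $h(\tau+1)-h(\tau)=(\lambda\tau+1)\left[f(\tau+1)-f(\tau)\right]\ge 0$ by monotonicity of $f$, so the integer sequence decreases and then is non-decreasing. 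This matters because both your sandwich argument and the paper's tacitly require that the floor/ceil comparison produce the \emph{global} integer minimizer over $\Pi^T$; your argument is the one that actually establishes this.

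One step is wrong as stated, though easily repaired: the parenthetical claim that ``any monotone interpolation of the partial sums suffices'' to produce a quasi-convex extension agreeing with $\bar{C}^{\pi(\tau)}$ at the integers. Monotonicity of the interpolant is not enough. Take an interval $[\tau,\tau+1]$ in the decreasing phase and let the interpolant of the partial sum rise steeply just after $\tau$ and then flatten; for suitable $p$ and $\lambda$ (e.g., $f(1)=0$, $\lambda=0.5$, $p=5$, with the jump of height $f(2)=1$ compressed into $[2,2.1]$) the ratio rises and then falls inside the interval, creating an interior local maximum strictly above both endpoint values. Sub-level sets are then not intervals, the extension is not quasi-convex, and the inference ``the integer minimizer lies at $\lfloor\tau'\rfloor$ or $\lceil\tau'\rceil$'' loses its justification. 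The fix: use the piecewise-linear (or the paper's natural closed-form) interpolation of the partial sums. Since $f$ is non-decreasing, the increments of the partial sums are non-decreasing, so this interpolant $S$ is \emph{convex}; then for $G(\tau)=\bigl(\lambda S(\tau)+p\bigr)/\bigl(\lambda(\tau-1)+1\bigr)$, every sub-level set $\{\tau: G(\tau)\le c\}=\{\tau:\lambda S(\tau)+p-c(\lambda(\tau-1)+1)\le 0\}$ is a sub-level set of a convex function, hence an interval, so $G$ is quasi-convex and agrees with $\bar{C}^{\pi(\tau)}$ at integers. With this replacement (and noting that the corollary's $\tau'$ refers to the natural extension used in Examples~1 and~2, whose numerator is convex, so the same sub-level-set argument applies), your proof is complete and strictly more detailed than the paper's.
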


\begin{proof}
Theorem~\ref{thm:exi_opt_avg_mdp} states that there exists an overall optimal capped reactive threshold-based policy ${\pi ^*} \in {\Pi ^{R + }} \cap {\Pi ^T}$.
For the optimal threshold-based policy ${\pi(\tau^*)} \in  {\Pi ^T}$, we have ${\bar{C}^{\pi ({\tau ^*})}} \le {\bar{C}^{{\pi ^*}}}$. 
On the other hand, since policy ${\pi ^*}$ is an overall optimal policy, we also have ${\bar{C}^{\pi ({\tau ^*})}} \ge {\bar{C}^{{\pi ^*}}}$. Therefore, we have ${\bar{C}^{\pi ({\tau ^*})}} = {\bar{C}^{{\pi ^*}}}$. This implies that the optimal threshold-based policy ${\pi(\tau^*)} \in  {\Pi ^T}$ is also an overall optimal policy among all online policies.
\end{proof}

Here, the real number $\tau'$ in Eq.~\eqref{eq:opt_threshold} can either be theoretically calculated if the expression of $\sum\nolimits_{t = 1}^{\tau  - 1} {f(t)} $ in ${\bar{C}^{\pi(\tau)} }$ is known (see below for two examples) or be numerically calculated otherwise.
The optimal threshold ${\tau ^*}$ may not be unique, depending on the staleness cost function $f$. Also, policy ${\pi ({\tau ^*})}$ could be the same as policy $\pi^*$ in some cases.

In the following, we provide the average expected cost and optimal threshold when the staleness cost is a linear function and a square function of the AoI, respectively. 
\begin{example}[A linear staleness cost function]
\label{example:linear}
Assume that the request arrival process is Bernoulli($\lambda$), $f(\Delta ) = \Delta $, and the update cost is $p$. Then, for any policy $\pi (\tau) \in {\Pi ^T}$ with a positive integer threshold $\tau$,  we have
\begin{equation}
    {\bar{C}^{\pi(\tau)} } = \frac{{\lambda \tau (\tau  - 1)/2 + p}}{{\lambda (\tau  - 1) + 1}},
    \label{eq:exampel_1}
\end{equation}
and the optimal threshold ${\tau ^*} = \left\{ {\arg {{\min }_{\tau  \in \{ \left\lfloor {\tau '} \right\rfloor ,\left\lceil {\tau '} \right\rceil \} }}{\bar{C}^{\pi (\tau )}}} \right\}$,
where  $\tau ' = (\sqrt {2p\lambda  - \lambda  + 1}  + \lambda  - 1)/\lambda $.
\end{example}

\begin{example}[A quadratic staleness cost function]
Assume that the request arrival process is Bernoulli($\lambda $), $f(\Delta ) = \Delta^2$, and the update cost is $p$. Then, for any policy $\pi (\tau) \in {\Pi ^T}$ with a positive integer threshold $\tau$,  we have 
\begin{equation}
    {\bar{C}^{\pi(\tau)} } = \frac{{\lambda \left[ {{{(\tau  - 1)}^3}/3 + {{(\tau  - 1)}^2}/2 + (\tau  - 1)/6} \right] + p}}{{\lambda (\tau  - 1) + 1}},
    \label{eq:exampel_2}
\end{equation}
and the optimal threshold ${\tau ^*} = \left\{ {\arg {{\min }_{\tau  \in \{ \left\lfloor {\tau '} \right\rfloor ,\left\lceil {\tau '} \right\rceil \} }}{\bar{C}^{\pi (\tau )}}} \right\}$,
where  $\tau '$ is the solution of 
\begin{equation*}
    1 - 6p - 6\tau  + 6{\tau ^2} + \lambda (4\tau  - 1){(\tau  - 1)^2} = 0.
\end{equation*}
\end{example}

\section{Numerical and Experimental Results}\label{sec:simulation}

In this section, we perform extensive simulations and experiments to verify our theoretical results and compare the performance of the optimal threshold-based policy with several baseline policies using both synthetic data and real traces. 
Throughout this section, we consider two types of the staleness cost: a linear function (i.e., $f(\Delta ) = \Delta $) and a quadratic function (i.e., $f(\Delta ) = \Delta^2$).

We first evaluate the performance of threshold-based policies with different thresholds when the staleness cost is a linear function of AoI in Fig.~\ref{fig:multithresholds}.
The setting of the simulations is as follows.
The request arrival process is Bernoulli with rate $\lambda=0.1$, and the update cost is $p=100$. 
The simulation results are the average of 100 simulation runs, where each run consists of $N= 10^4$ requests (which is our default setting for the synthetic simulations).
We also include a breakdown of the results in terms of average staleness cost $\bar{a} \triangleq \sum_{j = 1}^{N} f(\Delta(r_j))/N$ and average update cost $\bar{p} \triangleq pU^\pi(N)/N$.
We observe that the simulation results of average total cost under threshold-based policies perfectly match the theoretical results in Example~\ref{example:linear}. Clearly, as the threshold increases, the update cost decreases, but the staleness cost increases. This is as expected because a higher threshold leads to less frequent updates, which results in a smaller update cost but a larger staleness cost. 
As a result, the average total cost, which is the sum of the two, first decreases and then increases. 
The optimal cost ${C^{\pi ({\tau ^*})}} \approx 36.22$ is achieved at ${\tau ^*} =  \{\arg {\min _{\{ \left\lfloor {\tau '} \right\rfloor ,\left\lceil {\tau '} \right\rceil \} }}{C^{\pi (\tau )}}\}  = 37$, where $\tau ' = (\sqrt {2p\lambda  - \lambda  + 1}  + \lambda  - 1)/\lambda  \approx 36.72$. 
Similar observations can also be made from Fig.~\ref{fig:square_multithresholds}, where the staleness cost is a quadratic function of AoI.
As expected, the staleness cost increases remarkably with the threshold, since the staleness cost function is quadratic.

Next, we compare the performance of the optimal threshold-based policy with several baselines in Figs.~\ref{fig:algcomp} and \ref{fig:square_algcomp}, where the staleness cost function is linear and quadratic, respectively.
We consider three baselines: (i) a naive policy, (ii) periodic policies, and (iii) the optimal offline policy.
The naive policy is a capped reactive threshold-based policy with a threshold being
equal to ${\Delta ^*} = \left\lceil p \right\rceil$ when $f(\Delta ) = \Delta $ (or ${\Delta ^*} = \left\lceil \sqrt{p} \right\rceil$ when $f(\Delta ) = \Delta ^2$).
That is, upon receiving a request, this policy naively updates the data when the staleness cost is no smaller than the update cost, otherwise it does not.
A periodic policy has a positive integer period $d$ and updates the data every $d$ time-slots, i.e., $u_i = id$ for $i = 1, 2,\dots$. Note that a periodic policy is not a reactive policy.
Following a similar argument in the proof of Theorem~\ref{theorem:thresholdperformance}, we can show that the average cost under a periodic policy with period $d$ is $({p} + \lambda {d}({d} - 1)/2)/\lambda {d}$ when $f(\Delta ) = \Delta $ (or $(p + \lambda ({(d - 1)^3}/3 + {(d - 1)^2}/2 + (d - 1)/6))/\lambda d$ when $f(\Delta ) = \Delta^2 $). 
In the comparisons, we only consider the optimal periodic policy with  $d^* = \left\lceil {\sqrt {2{p}/\lambda} } \right\rceil $ when $f(\Delta ) = \Delta $ (or $d^*$ being the solution of  $4\lambda {d^3} - 3\lambda {d^2} - 6p = 0$ when $f(\Delta ) = \Delta^2 $).
The optimal offline policy has the exact knowledge of all the request arrival times and is obtained based on the dynamic programming approach.
Hence, the average cost under an optimal offline policy can be viewed as a lower bound of all online policies.

\begin{figure}[!t]
		\centering
        \includegraphics[width=0.35\textwidth]{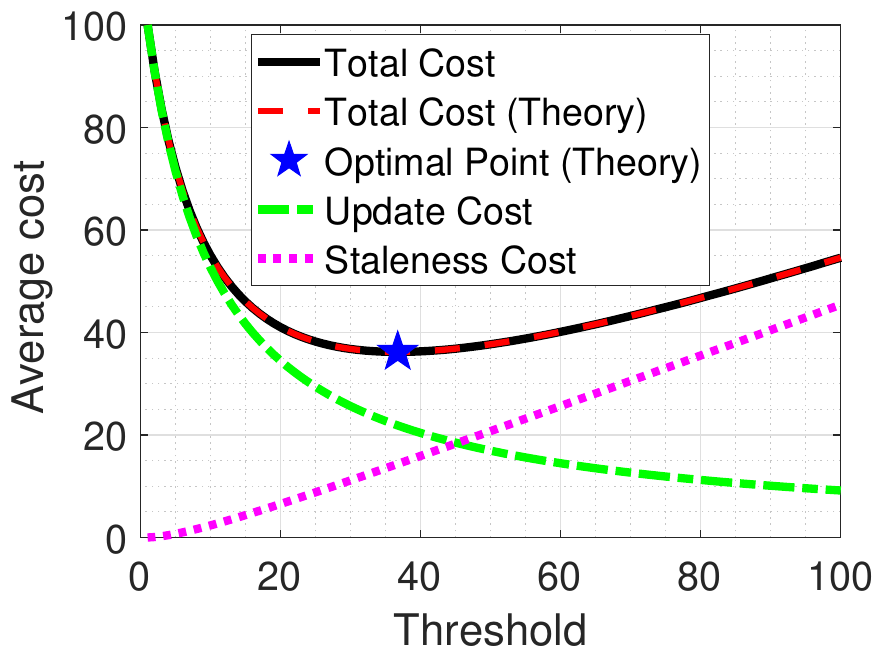}
		\caption{Average cost under threshold-based policies with different thresholds when $f(\Delta ) = \Delta $, where $\lambda  = 0.1$ and $p = 100$.}
	\label{fig:multithresholds}
\vspace{-0.3cm}
\end{figure}

\begin{figure}[!t]
		\centering
        \includegraphics[width=0.35\textwidth]{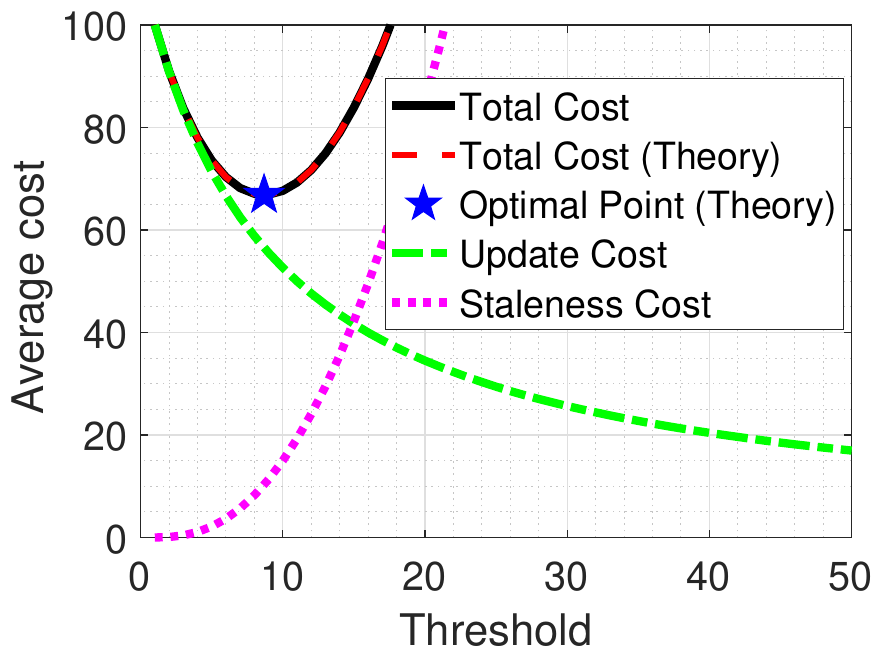}
		\caption{Average cost under threshold-based policies with  different thresholds when $f(\Delta ) = \Delta^2$, where $\lambda  = 0.1$ and $p = 100$.}
	\label{fig:square_multithresholds}
\vspace{-0.3cm}
\end{figure}

\begin{figure*}[!t]
	\begin{minipage}[t]{0.66\textwidth}
		\centering
		\begin{subfigure}[b]{0.49\linewidth}
			\includegraphics[width=1\textwidth]{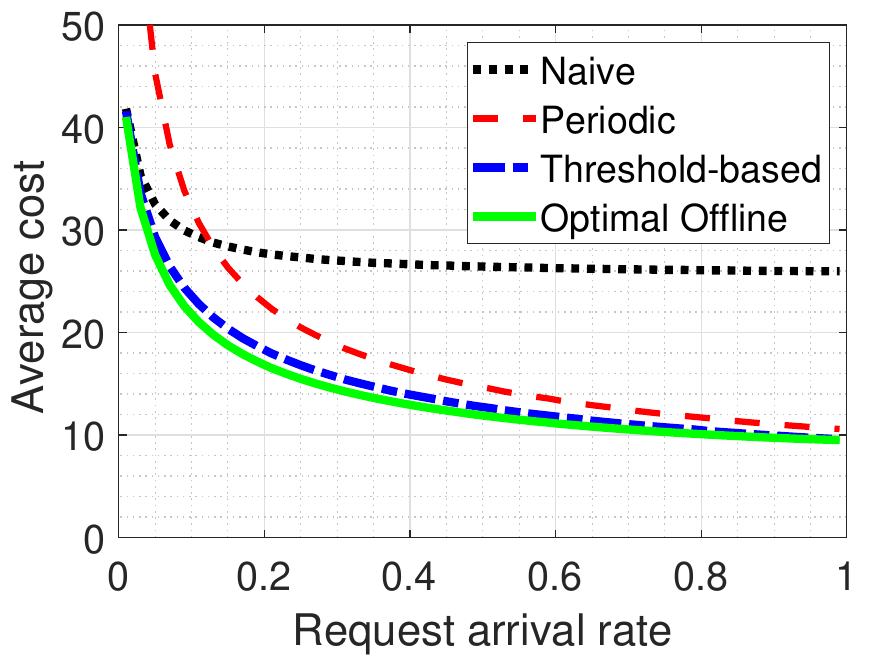}
			\caption{Fixed update cost $p = 50$}
			\label{fig:multirates}
		\end{subfigure}
		\begin{subfigure}[b]{0.49\linewidth}
		\includegraphics[width=1\textwidth]{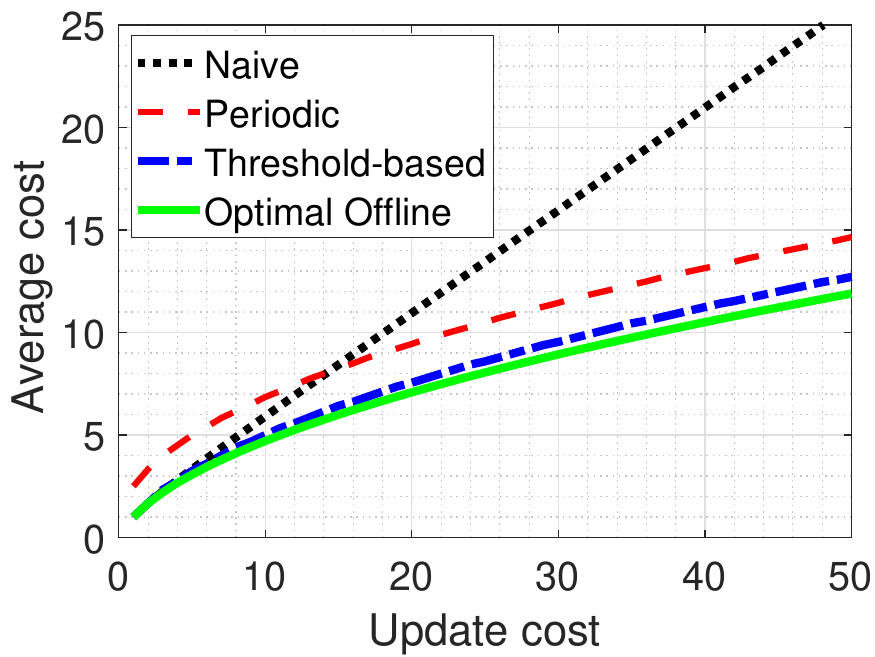}
			\caption{Fixed request  arrival rate $\lambda = 0.5$}
			\label{fig:multicosts}
		\end{subfigure}
		\caption{Performance comparisons of different policies with different request arrival rate $\lambda$ and different update cost $p$ when $f(\Delta ) = \Delta$, respectively.}
		\label{fig:algcomp}
	\end{minipage}
	\quad
	\begin{minipage}[t]{0.33\textwidth}
	\centering
    \begin{subfigure}[b]{1\linewidth}
    \centering
        \includegraphics[width=0.98\textwidth]{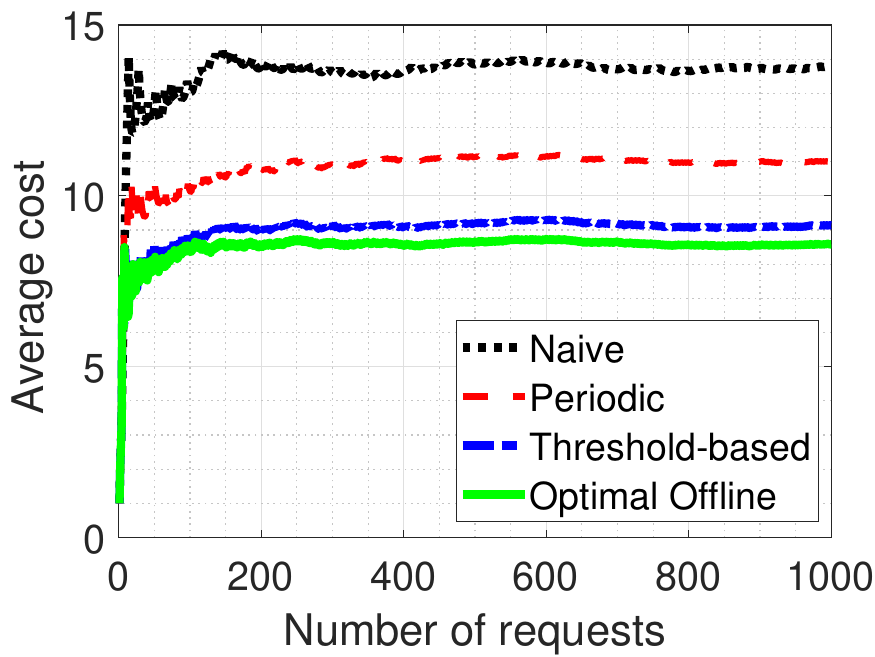}
        \caption*{}
    \end{subfigure}
    \vspace{-1cm}
    \caption{Performance comparisons of different policies using trace dataset,
    where update cost $p=25$, request arrival rate $\lambda = 0.4$, and $f(\Delta ) = \Delta$.
    }
    \label{fig:trace_res}
	\end{minipage}
	\vspace{-0.3cm}
\end{figure*}

\begin{figure*}[!t]
	\begin{minipage}[t]{0.66\textwidth}
		\centering
		\begin{subfigure}[b]{0.49\linewidth}
			\includegraphics[width=1\textwidth]{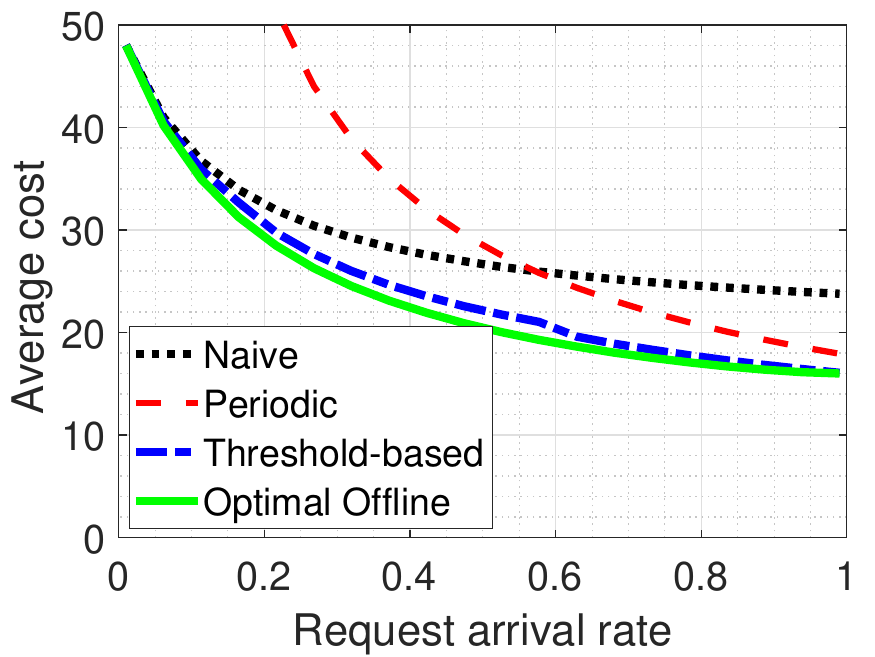}
			\caption{Fixed update cost $p = 50$}
			\label{fig:square_multirates}
		\end{subfigure}
		\begin{subfigure}[b]{0.49\linewidth}
		\includegraphics[width=1\textwidth]{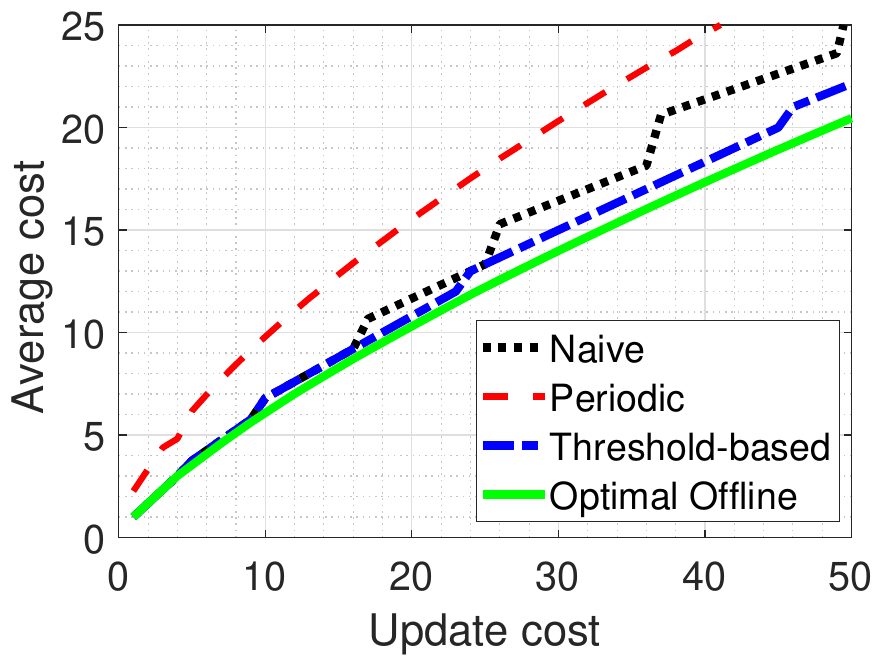}
			\caption{Fixed request  arrival rate $\lambda = 0.5$}
			\label{fig:square_multicosts}
		\end{subfigure}
		\caption{Performance comparisons of different policies with different request arrival rate $\lambda$ and different update cost $p$ when $f(\Delta ) = \Delta^2 $, respectively.}
		\label{fig:square_algcomp}
	\end{minipage}
	\quad
	\begin{minipage}[t]{0.33\textwidth}
	\centering
    \begin{subfigure}[b]{1\linewidth}
    \centering
        \includegraphics[width=0.98\textwidth]{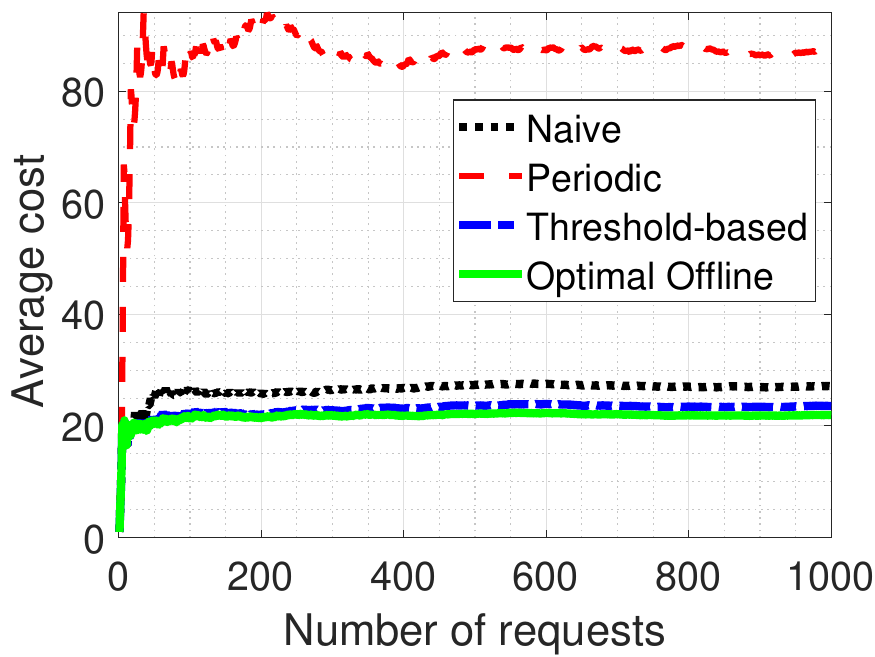}
        \caption*{}
    \end{subfigure}
    \vspace{-1cm}
    \caption{Performance comparisons of different policies using trace dataset,
    where update cost $p=50$, request arrival rate $\lambda = 0.4$, and $f(\Delta ) = \Delta^2$.
    }
    \label{fig:square_trace_res}
	\end{minipage}
	\vspace{-0.3cm}
\end{figure*}

Fig.~\ref{fig:multirates} shows the results for the setting with a fixed update cost $p = 50$ and a varying  request arrival rate $\lambda$ when the staleness cost function is linear. 
We observe that the optimal threshold-based policy outperforms all the other online policies and is very close to the optimal offline policy.
When the request arrival rate is small, the optimal periodic policy performs poorly. This is because Bernoulli process with a small rate $\lambda$  results in a larger mean (i.e., $1/\lambda$) and a larger variance (i.e., $(1 - \lambda)/{\lambda^2}$) of the inter-arrival time of the requests. Hence, the inter-arrival time of the requests is usually larger and more random.
In this case, a periodic policy that updates the data at fixed time instants resulting in a larger staleness cost. 
On the other hand, the interarrival time of requests is large when the rate $\lambda$ approaches $0$, resulting in a large AoI (and thus a high staleness cost) when the requests arrive. All the capped reactive policies as well as the offline optimal policy would make an update decision for each request, making their average cost close to the update cost $p$. This aligns well with our theoretical result when we let $\lambda =0$ in Eq.~\eqref{eq:exampel_1}.
When the request arrival rate becomes larger, the performance of the optimal periodic policy improves and is close to that of the optimal threshold-based policy. This is because a large request arrival rate $\lambda$ leads to a small period $d^*$. In this case, the requests arrive more frequently, and the updates also occur more frequently. Hence, the staleness cost becomes small, and the update cost becomes dominant.
We also observe that the naive policy performs poorly compared to the other policies when $\lambda$ is large.
This is because the naive policy is agnostic about the request arrival rate. The update period under a naive policy is roughly equal to $p$ regardless of the request arrival rate. However, when $\lambda$ becomes large, there could be many more requests arriving during an update interval of length $p$, which results in a large staleness cost.
In addition, when the request arrival rate $\lambda$ approaches $1$, the optimal threshold-based policy and the optimal periodic policy have the same optimal threshold/period (i.e., ${\tau ^*} = {d^*} = \left\lceil {\sqrt {2p} } \right\rceil $) as well as the same average cost (i.e., $\sqrt {2p}  - 1/2$), 
and their knowledge about the future request arrival process is almost the same as the optimal offline policy (i.e., there is a request arriving in every time-slot),  so their performance is also very close. 
Similar observations can also be made from Fig.~\ref{fig:square_multirates}, where the staleness cost function is quadratic. We omit the explanation since they are almost identical.

Fig.~\ref{fig:multicosts} shows the results for the setting with a fixed  request arrival rate $\lambda=0.5$ and a varying update cost $p$ when the staleness cost function is linear. We observe that the optimal threshold-based policy again outperforms all the other online policies and performs closely to the optimal offline policy. 
The optimal periodic policy performs poorly because it  is not a reactive policy, it cannot update timely according to the staleness cost of the requests, resulting in a large total staleness cost. We also observe that as the update cost increases, the naive policy performs much worse compared to the other policies. This is because the average cost under a naive policy increases at a rate of $O(p)$,
while the average cost under the other policies increases at a rate of $O(\sqrt{p})$.
In Fig.~\ref{fig:square_multicosts}, we show the results under the same setting except that the staleness cost function is quadratic. Again, the optimal threshold-based policy outperforms all other online policies. The optimal periodic policy performs poorly at all update costs. This is because its staleness costs become much larger due to the quadratic function and its inability to update timely upon the arrival of requests.
The average cost under the naive policy has several jumps. The reason is that when the threshold ${\Delta ^*}$ is set to $\left\lceil \sqrt{p} \right\rceil$, the integral change of $\left\lceil \sqrt{p} \right\rceil$ leads to a high increase in the average cost. For example, when the update cost $p$ in the range of $[26,36]$, the threshold ${\Delta ^*}$ under the naive policy is $\left\lceil {\sqrt p } \right\rceil  = 6$ and the average cost increases linearly with respect to $p$ according to Eq.~\eqref{eq:exampel_2}. However, when the update cost is in the range of $[37,49]$, the threshold ${\Delta ^*}$ under the naive policy becomes $\left\lceil {\sqrt p } \right\rceil  = 7$, which results in a much larger average cost compared to when ${\Delta ^*} = 6$ according to Eq.~\eqref{eq:exampel_2}. This also explains the jumps in the optimal threshold-based policy, but the jump is almost unnoticeably (e.g., when $p=46$) since it chooses the optimal thresholds and thus mitigates the jumping effect.

In Fig.~\ref{fig:trace_res}, we also compare the performance of different policies using the real trace dataset \cite{cacheWorkload-OSDI20} when the staleness cost function is linear. The trace dataset collects around  $700$ billion user requests (each contains a timestamp, anonymized key, operation, etc.; see details in \cite{cacheWorkload-OSDI20}) from  $54$ Twemcache clusters, which are the  in-memory caching used by Twitter. 
In order to simplify the analysis and presentation, we focus on the user request arrival times at the cache of the $26$th Twemcache cluster, 
whose request arrival rate is about $0.4$.
The request arrival process is no longer Bernoulli. 
In Fig.~\ref{fig:trace_res}, we assume that the update cost is $p=25$ and show the performances of different policies using the trace dataset in the first $10^3$ requests. For the optimal offline policy, we still apply the dynamic programming to obtain the optimal update times. 
For the naive policy, we simply let its threshold equal the update cost of $25$. 
For the threshold-based policy, we obtain the optimal threshold ${\tau ^*} = 9$ by plugging $\lambda =0.4$ into Eq.~\eqref{eq:opt_threshold}. Similarly, we obtain the optimal period ${d^*} = 11$ for the periodic policy. We observe that the optimal threshold-based policy outperforms the other online policies even though the request arrival process is now non-Bernoulli. 
In Fig.~\ref{fig:square_trace_res}, we change the setting to update cost $p = 50$ and a quadratic staleness cost function. Clearly, the optimal threshold-based policy still outperforms the other online policies.

\section{Conclusion}\label{sec:conclusion}
In this paper, we considered a fundamental tradeoff between the data freshness and the update cost in a time-sensitive information-update system.
We provided useful guidelines for the design of update policies.
Assuming Bernoulli request arrival process, we also proposed a threshold-based update policy and proved its optimality.
Our simulations based on both synthetic data and real traces corroborated the theoretical results and showed that the optimal threshold-based policy outperforms the baseline policies.
For future work, one interesting direction would be to consider more general settings where users are interested in multiple contents.

\appendix

\subsection{Proof of Lemma \ref{lemma:reactive_policy_sufficient}}
\label{appendix:reactive_policy_proof}
\begin{proof}
For any given policy $\pi \in \Pi$, there are two cases: (i) $\pi \in \Pi^{R}$ and (ii) $\pi \in \Pi \setminus \Pi^{R}$. 

Case (i) is trivial as we can simply choose $\pi^{\prime} = \pi$. 
In Case (ii), we construct a reactive policy $\pi^{\prime} \in \Pi^{R}$ in the following manner. 
Consider any sample path with $N$ requests: 1) for all the updates performed by policy $\pi$ at some request arrival times, policy $\pi^{\prime}$ also has such updates; 2) for each update performed by policy $\pi$ that is not at any of the request arrival times, policy $\pi^{\prime}$ postpones the update to the time instant when the next request arrives. Clearly, policy $\pi^{\prime}$ constructed in the above manner is a valid reactive policy. 

Next, we want to prove that in Case (ii), policy $\pi^{\prime}$ achieves a total cost smaller than that of policy $\pi$ by induction.
Note that at time 1, the AoI under policy $\pi^{\prime}$ is no larger than that of policy $\pi$. (Although since the AoI at time 1 is equal under both policies, we want to use ``no larger than" instead of ``equal" here so that the argument can later be repeatedly applied.)
Let $u_k$ be the first time at which policy $\pi$ updates the data when no request arrives,  let $r_l$ be the arrival time of the first request that arrives after time $u_k$. 
Fig.~\ref{fig:reactive_policy} provides an illustration of such a scenario, where the solid black curve and the dashed red curve denote the AoI trajectories under policies $\pi$ and $\pi^{\prime}$, respectively.
We want to show that policy $\pi^{\prime}$ has a smaller total cost than that of policy $\pi$ during interval $[1,r_l]$.

\begin{figure}[!t]
    \centering
    \includegraphics[scale = 0.6]{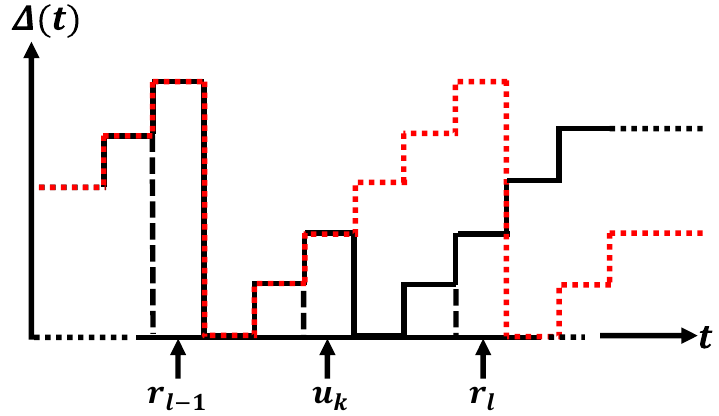}
    \vspace{-0.1cm}
    \caption{An illustration of the advantage of a constructed reactive policy $\pi^{\prime} \in \Pi^{R}$ over an arbitrary given non-reactive policy $\pi \in \Pi \backslash \Pi^{R}$, where the solid black curve and the dashed red curve denote the AoI trajectories under policies $\pi$ and $\pi^{\prime}$, respectively.}
    \label{fig:reactive_policy}
    \vspace{-0.45cm}
\end{figure}

We first compare the total cost during interval $[1, u_k)$ under both policies. Note that under policy $\pi$, all the updates performed before time $u_k$ are at request arrival times because $u_k$ is the first time at which policy $\pi$ updates the data when no request arrives.
Due to Step 1) of the above construction, policy $\pi^{\prime}$ must update the data in a way that is exactly the same as policy $\pi$ during interval $[1, u_k)$.
Hence, the total update cost during interval $[1, u_k)$ must be the same under both policies. Also, the AoI under policy $\pi^{\prime}$ must remain no larger than that under policy $\pi$ during interval $[1, u_k)$.
Hence, policy $\pi^{\prime}$ must have a total staleness cost no larger than that of policy $\pi$ during interval $[1, u_k)$. This implies that policy $\pi^{\prime}$ has a total cost (including both update cost and staleness cost) no larger than that of policy $\pi$ during interval $[1, u_k)$.

We now compare the total cost during the interval $[u_k, r_l]$ under both policies. Due to Step 2) of the construction of policy $\pi^{\prime}$, it does not update the data at time $u_k$ but postpones the update to time $r_l$. Then, under policy $\pi^{\prime}$ the total cost during interval $[u_k, r_l]$ is equal to $p$ because the only update is performed at time $r_l$ at which the only request arrives. On the other hand, policy $\pi$ has a total cost strictly larger than $p$ because it performs one update at time $u_k$ and needs to pay either a staleness cost at time $r_l$ or at least an additional update cost during interval $(u_k,r_l]$.
Hence, policy $\pi^{\prime}$ has a total cost smaller than that of policy $\pi$ during interval $[u_k,r_l]$.

With the above discussions, we show that policy $\pi^{\prime}$ achieves a total cost smaller than that of policy $\pi$ during interval $[1, r_l]$.

Since the AoI under policy $\pi^{\prime}$ becomes $1$ at time $r_{l+1}$, the AoI under policy $\pi^{\prime}$ is apparently no larger than that of policy $\pi$ at time $r_{l+1}$.
Then, we can view time $r_{l+1}$ as a new starting point and inductively apply the above argument.
This completes the proof.
\end{proof}

\subsection{Proof of Lemma~\ref{lemma:r+}}
\label{appendix:capped_reactive_policy_proof}
\begin{figure}[!t]
    \centering
    \includegraphics[scale = 0.6]{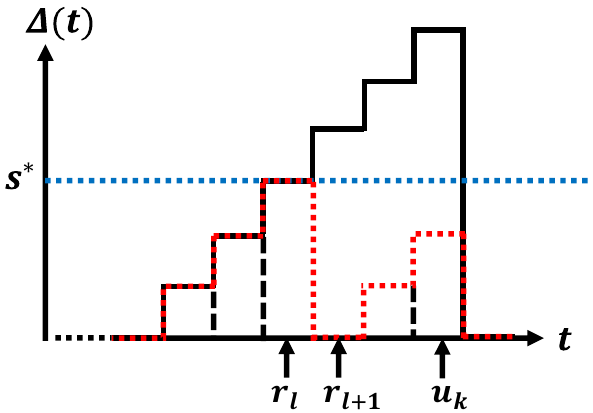}
    \caption{An illustration of the advantage of a constructed capped reactive policy $\pi^{\prime} \in \Pi^{R+}$ over an arbitrary given non-capped reactive policy $\pi\in\Pi^{R}\backslash\Pi^{R+}$, where the solid black curve and the dashed red curve denote the AoI trajectories under policies $\pi$ and $\pi^{\prime}$, respectively.}
    \label{fig:capped_reactive_policy}
    \vspace{-0.45cm}
\end{figure}
\begin{proof}
For any given policy $\pi \in \Pi^{R}$, there are two cases: (i) $\pi \in \Pi^{R+}$ and (ii) $\pi \in \Pi^{R} \setminus \Pi^{R+}$. 

Case (i) is trivial as we can simply choose $\pi^{\prime} = \pi$. 
In Case (ii), we construct a capped reactive policy $\pi^{\prime} \in \Pi^{R+}$ in the following manner.
Consider any sample path with $N$ requests:
1) for all the updates performed by policy $\pi$, policy $\pi^{\prime}$ also has such updates;
2) for every request $r_j$ such that $\Delta(r_j)\geq s^*$ (see definition of $s^*$ in Eq.~\eqref{eq:smallest_s}), policy $\pi^{\prime}$ updates the data at $r_j$ regardless of policy $\pi$'s update decision at $r_j$.
Clearly, policy $\pi^{\prime}$ constructed in the above manner is a valid capped reactive policy.

Next, we want to show that in Case (ii), policy $\pi^{\prime}$ achieves a total cost smaller than that of policy $\pi$ by induction.
Recall that the system starts with $\Delta(1)=1$.
Let $r_l$ be the first request arrival time at which $\Delta(r_j) \geq s^*$ but policy $\pi$ does not update the data.
Then, the AoI trajectories before time $r_l$ are exactly the same under policies $\pi^{\prime}$ and $\pi$. 
Due to Step 2) of the above construction, policy $\pi^{\prime}$ must update the data at $r_l$ because $\Delta(r_j) \geq s^*$.
Let $u_k$ be the first update performed after $r_l$ under policy $\pi$.
Then, policy $\pi^{\prime}$ must also update the data at time $u_k$ due to Step 1) of the above construction.
Fig.~\ref{fig:capped_reactive_policy} provides an illustration of such a scenario, where the solid black curve and the dashed red curve denote the AoI trajectories under policies $\pi$ and $\pi^{\prime}$, respectively. It is easy to see that policy $\pi^{\prime}$ has a smaller AoI than that under policy $\pi$ during interval $(r_l, u_k)$.

We now compare the total cost during interval $[1, u_k]$ under both policies.
First, note that the AoI trajectories before time $r_l$ are exactly the same under policies $\pi^{\prime}$ and $\pi$, i.e., both policies have the same cost during interval $[1,r_l)$. Then, it is easy to see that at time $r_l$, policy $\pi^{\prime}$ has an update cost $p$ but no staleness cost, and policy $\pi$ has a staleness cost $\Delta(r_j)$ but no update cost. Hence, policy $\pi^{\prime}$ achieves a smaller total cost than that of policy $\pi$ at time $r_l$ because $\Delta(r_j) \geq s^*$.
Also, note that policy $\pi^{\prime}$ has a smaller AoI than that of policy $\pi$ during interval $(r_l, u_k)$ and that there is no update under both policies during interval $(r_l, u_k)$. Hence, policy $\pi^{\prime}$ achieves a smaller total cost than that of policy $\pi$ during interval $(r_l, u_k)$. At time $u_k$, both policies have an update cost $p$ but no staleness cost since neither of them updates the data. Combining the above discussions, we show that policy $\pi^{\prime}$ achieves a smaller total cost than that of policy $\pi$ during interval $[1, u_k]$.

Since $\Delta(u_k)$ drops to $0$  at time $u_{k}$ under both policies, we can view time $(u_{k}+1)$ as a new starting point and repeatedly apply the above argument. This completes the proof.
\end{proof}

\subsection{Proof of Theorem \ref{thm:exi_opt_avg_mdp}}
\label{appendix:theorem_1}
\begin{proof}
Our proof includes two steps: $1$) We study a discounted MDP and show that the optimal value function of the discounted MDP is non-decreasing in the initial state $s$; 
$2$) Based on the optimal value function of the discounted MDP, we derive the Bellman equation of the average expected cost  and show that it has a threshold-based structure, where the threshold is based on the AoI.
This implies that there exists an optimal threshold-based stationary capped reactive  policy for the average expected cost.

\textbf{Step 1):} In general, the derivation and properties of the Bellman equation of the average expected cost are not easy to obtain, and we usually rely on the study of the discounted MDP to get some insights towards the design of an optimal policy~\cite{ross2014introduction}.

The expected total $\alpha$-discounted cost of a capped reactive policy $\pi  \in {\Pi ^{{R+ }}}$ is defined as  
\begin{equation}
	C_\alpha ^\pi (s) \triangleq {\mathbb{E}_\pi }\left[ {\sum\limits_{n = 1}^\infty  {{\alpha ^{n-1}}c({s_n},{a_n})} |{s_1} = s} \right], 
\end{equation}
where $\ 0 \le \alpha  < 1$ is the discount factor.
Here, $C_\alpha ^\pi (s)$ is well defined, given that for any $n$, we have ${\mathbb{E}_\pi }[c({s_n},{a_n})|{s_1} = s] \leq p$ under a capped reactive policy $\pi$, and thus, we have
\vspace{-0.05cm}
\begin{equation}
    C_\alpha ^\pi (s) \le  \sum\limits_{n = 1}^\infty {{\alpha ^{n - 1}}p}  = \frac{p}{{1 - \alpha }}.
    \vspace{-0.05cm}
\end{equation}

Let ${C_\alpha }\left( s \right) \triangleq \mathop {\min }\limits_\pi  C_\alpha ^\pi (s)$ be the optimal value function. 
Then, we can obtain the Bellman equation of the $\alpha$-discounted MDP with ${C_\alpha}(s)$ \cite{ross2014introduction}, which is
\vspace{-0.05cm}
\begin{equation}
\begin{array}{*{20}{l}}
{{C_\alpha}(s) = \mathop {\min }\limits_{a \in \mathcal{A}_s} \left\{ {c(s,a) + \alpha \sum\limits_{z\in S}   p(z\mid s,a){C_\alpha}(z)} \right\}}.
\end{array}
\label{eq:gel_dis_opt_eq}
\vspace{-0.05cm}
\end{equation}
The Bellman equation Eq.~\eqref{eq:gel_dis_opt_eq}
states that the value of the initial state $s$ (i.e., ${C_\alpha }(s)$) equals the expected return of the best action, which is
the discounted expected value of the next state (i.e., $\alpha \sum\nolimits_{z \in S} {p(z\mid s,a){C_\alpha }(z)} $), plus the immediate cost along the way (i.e., ${c(s,a)}$).  
In the following, we show that ${C_\alpha}(s)$ is non-decreasing in $s$. This property enables us to show that the Bellman equation of the average cost (i.e., Lemma~\ref{lemma:bellman_avg}) has a threshold-based structure.

\begin{lemma}
\vspace{-0.1cm}
The optimal value function  ${C_\alpha}(s)$ is non-decreasing in the initial state $s$.
\label{lemma:non-decreasing}
\vspace{-0.1cm}
\end{lemma}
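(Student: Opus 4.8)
The plan is to establish monotonicity of $C_\alpha$ through \emph{value iteration} together with induction on the iterate index, rather than manipulating the fixed point of Eq.~\eqref{eq:gel_dis_opt_eq} directly. Define the sequence $C_\alpha^{(0)}(s) \equiv 0$ and
\[
C_\alpha^{(n+1)}(s) = \min_{a \in \mathcal{A}_s}\left\{ c(s,a) + \alpha \sum_{z \in \mathcal{S}} p(z \mid s,a)\, C_\alpha^{(n)}(z) \right\}.
\]
Since every one-step cost is bounded by $p$ under a capped reactive policy and $0 \le \alpha < 1$, the standard theory of discounted MDPs guarantees $C_\alpha^{(n)} \to C_\alpha$ pointwise. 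Hence it suffices to show that each $C_\alpha^{(n)}$ is non-decreasing in $s$, because a pointwise limit of non-decreasing functions is non-decreasing. The base case $n=0$ is immediate, as $C_\alpha^{(0)}$ is constant.

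For the inductive step I would isolate the two candidate action-values and exploit the structure of the transition kernel. The \emph{update} value
\[
V_1 \triangleq p + \alpha \sum_{z \ge 1}(1-\lambda)^{z-1}\lambda\, C_\alpha^{(n)}(z)
\]
\emph{does not depend on} $s$, since after an update the AoI resets and the next-state distribution is identical regardless of the current state. The \emph{no-update} value, available only where $f(s) < p$, becomes, after the substitution $w = z-s$,
\[
V_0(s) \triangleq f(s) + \alpha \sum_{w \ge 1}(1-\lambda)^{w-1}\lambda\, C_\alpha^{(n)}(s+w).
\]
Using that $f$ is non-decreasing and that $C_\alpha^{(n)}$ is non-decreasing by the inductive hypothesis (so each term $C_\alpha^{(n)}(s+w)$ increases with $s$), the value $V_0(s)$ is non-decreasing in $s$.

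It then remains to combine these facts across the state-dependent action sets $\mathcal{A}_s$, which I expect to be the only delicate point. For any $s_1 < s_2$, monotonicity of $f$ rules out the case $f(s_1) \ge p > f(s_2)$, leaving three cases: if both states satisfy $f < p$, then $C_\alpha^{(n+1)}(s_i) = \min\{V_1, V_0(s_i)\}$ and the claim follows from $V_0(s_1) \le V_0(s_2)$; if only $s_1$ satisfies $f(s_1) < p$, then $C_\alpha^{(n+1)}(s_1) = \min\{V_1, V_0(s_1)\} \le V_1 = C_\alpha^{(n+1)}(s_2)$; and if both satisfy $f \ge p$, then both values equal the constant $V_1$. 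In every case $C_\alpha^{(n+1)}(s_1) \le C_\alpha^{(n+1)}(s_2)$, closing the induction. The main obstacle is precisely this interface between the region where not updating is permitted and the region where updating is forced; the capped structure resolves it cleanly, because the forced-update value $V_1$ is a single state-independent constant that dominates any $\min\{V_1, V_0(s)\}$, so no monotonicity is lost when the action set shrinks as $s$ crosses $\Delta^*$.
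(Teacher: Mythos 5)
Your proof is correct and takes essentially the same route as the paper's: both establish that finite-horizon/value-iteration approximants are non-decreasing in $s$ by induction on the iterate (the update value is a state-independent constant, the no-update value is monotone by monotonicity of $f$ and the inductive hypothesis, and the capped action sets are handled by the same interface argument), and then pass to the pointwise limit. The only differences are cosmetic: you initialize the iteration at zero and cite the standard contraction-based convergence of value iteration, whereas the paper uses an $n$-stage problem with terminal cost $\min\{p,f(s)\}$, invokes monotone convergence, and proves monotonicity of the limit by contradiction rather than directly.
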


We provide the proof of Lemma~\ref{lemma:non-decreasing} in Appendix~\ref{proof:non-decreasing} and explain the key ideas in the following. 
The goal is to construct a sequence $\{ {C_{\alpha ,n}}\left( s \right)\} $ that is non-decreasing in $s$ for any $n$, where $C_{\alpha,n}(s)$ is the minimal expected discounted cost in an $n$-stages problem. Then, we show that ${C_\alpha }(s) = \mathop {\lim }\limits_{n \to \infty } {C_{\alpha ,n}}(s)$, which implies that ${C_\alpha }(s)$ is also non-decreasing in $s$. 

\textbf{Step 2):} With the optimal value function of the $\alpha$-discounted MDP (i.e., ${C_\alpha }(s)$), we can derive the Bellman equation of the average cost as follows.

\vspace{-0.3cm}
\begin{lemma}
\label{lemma:bellman_avg}
Let $h(s) \triangleq \mathop {\lim }\limits_{\alpha  \to 1} [{C_\alpha }(s) - {C_\alpha }(1)]$ and $g \triangleq \mathop {\lim }\limits_{\alpha  \to 1} (1 - \alpha ){C_\alpha }(1)$. Then, the Bellman equation of the average cost is given by the following:
\vspace{-0.1cm}
\begin{equation}
\begin{array}{*{20}{l}}
{h(s) + g} =\\
{\left\{ {\begin{array}{*{20}{ll}}
{p  + \sum\limits_{z = 1}^\infty  {{{(1 - \lambda )}^{z - 1}}} \lambda h(z),\;{\rm{if}}\;s \ge {\Delta ^*};}\\
{\min \left\{ {p + \sum\limits_{z = 1}^\infty  {{{(1 - \lambda )}^{z - 1}}} \lambda h(z),} \right.}\\
{\left. {f(s) + \sum\limits_{z = s + 1 }^\infty  {{{(1 - \lambda )}^{z - s - 1}}} \lambda h(z)} \right\},\;{\rm{if}}\;s < {\Delta ^*}.}
\end{array}} \right.}
\end{array}
\label{eq:gel_avg_opt_eq}
\end{equation}
\vspace{-0.7cm}
\end{lemma}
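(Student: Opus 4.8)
The plan is to use the standard vanishing-discount (Abelian) argument, passing from the discounted optimality equation~\eqref{eq:gel_dis_opt_eq} to the average-cost optimality equation by letting $\alpha \to 1$. First I would rewrite \eqref{eq:gel_dis_opt_eq} in relative form. Since $\sum_z p(z\mid s,a) = 1$, I can add and subtract $C_\alpha(1)$ inside the expectation: writing $h_\alpha(s) \triangleq C_\alpha(s) - C_\alpha(1)$ and $g_\alpha \triangleq (1-\alpha)C_\alpha(1)$, equation~\eqref{eq:gel_dis_opt_eq} becomes the pre-limit relation $h_\alpha(s) + g_\alpha = \min_{a\in\mathcal{A}_s}\{ c(s,a) + \alpha \sum_z p(z\mid s,a) h_\alpha(z)\}$. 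Spelling out the two action sets and the transition kernel reproduces exactly the right-hand side of \eqref{eq:gel_avg_opt_eq}, but with the discount $\alpha$ present and with $h_\alpha, g_\alpha$ in place of $h, g$. The whole proof then reduces to justifying the termwise limit $\alpha \to 1$.

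The crux is to show that the limits defining $h(s)$ and $g$ exist and that the limit may be taken inside the (finite) minimization and the (infinite) geometric sum. For this I would first establish that $\{h_\alpha\}$ is bounded uniformly in $\alpha$. Monotonicity from Lemma~\ref{lemma:non-decreasing} gives $h_\alpha(s) \ge 0$ for $s \ge 1$, and because the capped-reactive constraint forces an update whenever $s \ge \Delta^*$, the value $C_\alpha(s)$ is constant for all $s \ge \Delta^*$; hence $\max_s h_\alpha(s) = C_\alpha(\Delta^*) - C_\alpha(1)$ and the effective state space is finite. Since every state communicates with state $1$ (right after any update the next decision state equals $1$ with probability $\lambda > 0$) and each per-step cost is capped by $p$, a first-passage/coupling bound gives $0 \le h_\alpha(s) \le M$ for a constant $M$ independent of $\alpha$. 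Uniform boundedness then yields convergent subsequences $\alpha_k \to 1$ along which $h_{\alpha_k}(s) \to h(s)$ and $g_{\alpha_k} \to g$; because the resulting limiting equation characterizes $(h,g)$ for this finite communicating chain, the full limits exist and match the definitions in the statement.

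With the uniform bound in hand I would pass to the limit. The minimization is over at most two actions, so $\lim$ and $\min$ interchange freely. For the infinite sum, the weights $(1-\lambda)^{z-1}\lambda$ are summable and dominate $|h_\alpha(z)| \le M$, so dominated convergence lets me move $\lim_{\alpha\to1}$ inside $\sum_z$, with $\alpha \to 1$ in the prefactor as well. This converts the pre-limit relation into \eqref{eq:gel_avg_opt_eq} case by case: for $s \ge \Delta^*$ only the action $a=1$ survives, giving the single update expression, while for $s < \Delta^*$ both actions remain and the $\min$ is retained.

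The main obstacle I anticipate is the second step: rigorously establishing the uniform-in-$\alpha$ bound on the relative value function $h_\alpha$ (equivalently, the communicating/regularity condition that makes the vanishing-discount method valid) and then justifying the interchange of the limit with the infinite summation. Everything else---the algebraic rewriting and the finite-action minimization---is routine. I would lean on the standard average-cost machinery of~\cite{ross2014introduction} to package the boundedness and convergence once the communicating structure and the cost cap $c(s,a)\le p$ are verified.
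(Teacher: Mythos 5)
Your proposal is correct and follows the same overall vanishing-discount route as the paper: rewrite \eqref{eq:gel_dis_opt_eq} in relative form with $h_\alpha(s)=C_\alpha(s)-C_\alpha(1)$ and $g_\alpha=(1-\alpha)C_\alpha(1)$, bound $h_\alpha$ uniformly in $\alpha$, extract convergent limits, and pass the limit through the two-action minimum and the geometric sums by bounded convergence. Where you genuinely differ is the key uniform-boundedness step. The paper bounds $h_\alpha(s) \le p + \alpha\sum_{z=2}^{\infty}(1-\lambda)^{z-1}\lambda\, C_\alpha(z)$ directly from the Bellman equation and declares the right-hand side a constant $M$; since $C_\alpha(z)$ can itself grow like $p/(1-\alpha)$, uniformity of that bound in $\alpha$ is asserted rather than established, which is the weakest point of the paper's own argument. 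Your route --- observing that the capped constraint makes $C_\alpha$ constant on $\{s\ge\Delta^*\}$ so the effective state space is finite, then bounding $h_\alpha(s)$ by the expected first-passage cost from $s$ to state $1$ (at most on the order of $p\Delta^*/\lambda$, since per-epoch costs are capped by $p$, an update is forced within $\Delta^*$ epochs, and each update is followed by state $1$ with probability $\lambda$) --- is the standard Sennott/Ross-type condition, is genuinely uniform in $\alpha$, and additionally lets you upgrade subsequential limits to the full limits $\lim_{\alpha\to 1}$ appearing in the lemma statement via uniqueness of the normalized solution of the limiting equation, a point the paper glosses over (its proof only ever works along subsequences, even though the lemma defines $h$ and $g$ as full limits). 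One caution: the lemma asserts not merely that $(h,g)$ satisfies \eqref{eq:gel_avg_opt_eq} but that this is the Bellman equation of the \emph{average} cost, i.e., that $g$ is the optimal average expected cost; the paper proves this in its Step 2 by an iterated-expectation/telescoping argument following \cite[Chapter V, Theorem 2.1]{ross2014introduction}, and your concluding appeal to ``standard average-cost machinery'' must be expanded into exactly that verification for the proof to be complete.
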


We provide the proof of Lemma~\ref{lemma:bellman_avg} in Appendix~\ref{proof:Bellman_Avg_Cost} and explain the key ideas in the following.
First, given the definitions of $h(s)$ and $g$, we show that Eq.~\eqref{eq:gel_avg_opt_eq} does hold.  
To this end, we define ${h_\alpha }(s) \triangleq {C_\alpha }(s) - {C_\alpha }(1)$ and substitute ${h_\alpha }(s)$ into the Bellman equation of the $\alpha$-discounted MDP Eq.~(\ref{eq:gel_dis_opt_eq}). Then, we prove that we can find a sequence $\{ {\alpha _m}\}~\to~1$ such that $\mathop {\lim }\limits_{m \to \infty }  {h_{{\alpha _m}}}(s)  = h(s)$ for any $s$ and $\mathop {\lim }\limits_{m \to \infty }  (1 - {\alpha _m}){C_{{\alpha _m}}}(1)  = g$. Taking the limit $m \to \infty $ on both sides of the Bellman equation of ${h_\alpha }(s)$, we obtain Eq.~\eqref{eq:gel_avg_opt_eq}. 
Second, we show that Eq.~\eqref{eq:gel_avg_opt_eq} is the Bellman equation for the average expected cost. This can be done by applying the same techniques used in \cite[Chapter V, Theorem $2.1$]{ross2014introduction}.

Next, we show that the Bellman equation Eq.~\eqref{eq:gel_avg_opt_eq} has a threshold structure, which guides us to find the optimal threshold-based stationary capped  reactive policy.

Assume that the current state is $s$.
Based on the Bellman equation Eq.~\eqref{eq:gel_avg_opt_eq}, it is optimal to update when $s \ge {\Delta ^*}$; and when $s < {\Delta ^*}$, it is optimal to update if  
\vspace{-0.1cm}
\begin{equation}
\begin{aligned}
    f(s) + \alpha \sum\limits_{z = s + 1}^\infty  {{{(1 - \lambda )}^{z - s - 1}}} \lambda h(z) \ge  \\ p +\alpha \sum\limits_{z = 1}^\infty  {{{(1 - \lambda )}^{z - 1}}} \lambda h(z),
\end{aligned}
\end{equation}
\vspace{-0.1cm}
where the right hand side is a constant. 
It is easy to check that $h(s)$ is non-decreasing in $s$ given that ${C_\alpha }\left( s \right)$ is non-decreasing in $s$. 
Hence, we can find $s^*$ as follows: 
\vspace{-0.1cm}
\begin{equation}
\label{eq:min_i}
    \begin{aligned}
    & s^*  { \triangleq \min \left\{ {s: f(s)  + \alpha \sum\limits_{z = s + 1}^\infty  {{{(1 - \lambda )}^{z - s - 1}}} \lambda h(z)}  \ge  \right.}\\ 
	& {\left. {p + \alpha \sum\limits_{z = 1}^\infty  {{{(1 - \lambda )}^{z - 1}}} \lambda h(z)} \right\}}.
    \end{aligned}
    \vspace{-0.1cm}
\end{equation}
Set $s^*={\Delta^*}$ when $s^* \geq \Delta^*$. 
Now consider a capped reactive policy $\pi^*  \in {\Pi ^{R + }}$:
upon receiving a request, policy $\pi^*$ updates the data if the current state is no smaller than $s^*$; otherwise, it does not update the data and replies with the current local data.
Clearly, policy $\pi^*$  is a stationary threshold-based policy. Besides, policy $\pi^*$ selects the action that minimizes the right hand side of the Bellman equation Eq.~\eqref{eq:gel_avg_opt_eq} for any state. Thus, it is an optimal policy~\cite[Chapter V, Theorem $2.1$]{ross2014introduction}. 
\end{proof}

\subsection{Proof of Lemma~\ref{lemma:non-decreasing}}

\begin{proof}
\label{proof:non-decreasing}
Our proof idea is to construct a sequence $\{ {C_{\alpha ,n}}\left( s \right)\} $ that is non-decreasing in $s$ for any $n$, where $C_{\alpha,n}(s)$ is the minimal expected discounted cost in an $n$-stages problem. Then, we show that ${C_\alpha }(s) = \mathop {\lim }\limits_{n \to \infty } {C_{\alpha ,n}}(s)$, which implies that ${C_\alpha }(s)$ is also non-decreasing in $s$.

First, we show how to construct the sequence $\{ {C_{\alpha ,n}}\left( s \right)\} $.
Consider an $n$-stage problem of our $\alpha$-discounted MDP. 
Denote the minimal expected discounted cost of this $n$-stage problem by
\vspace{-0.2cm}
\begin{equation*}
	\begin{aligned}
	 & {{C_{\alpha ,n}}(s)  \triangleq \mathop {\min }\limits_{a \in {A_s}} \left\{ {c(s,a) + \alpha \sum\limits_{z \in S} p (z\mid s,a){C_{\alpha ,n - 1}}(z)} \right\}} \\
	& = \left\{ {\begin{array}{*{20}{l}}
    {p + \alpha \sum\limits_{z = 1}^\infty  {{{(1 - \lambda )}^{z - 1}}} \lambda {C_{\alpha ,n-1}}(z),\;{\rm{if}}\;s \ge {\Delta ^*};}\\
   {\min \left\{ {p + \alpha \sum\limits_{z = 1}^\infty  {{{(1 - \lambda )}^{z - 1}}} \lambda {C_{\alpha ,n-1}}(z),} \right.}\\
   {\left. {f(s) + \alpha \sum\limits_{z = s + 1}^\infty  {{{(1 - \lambda )}^{z - s - 1}}} \lambda {C_{\alpha,n-1} }(z)} \right\},\;{\rm{if}}\;s < {\Delta ^*},}
   \end{array}} \right.
	\end{aligned}
 \vspace{-0.2cm}
\end{equation*}
where the terminal cost is ${C_{\alpha ,1}}\left( s \right) \triangleq \min \{ p,f(s)\}$.

Then, we prove by induction that our constructed sequence $\{ {C_{\alpha ,n}}\left( s \right)\} $ is non-decreasing in $s$ for any $n$. Obviously, ${C_{\alpha, 1}}(s)$ is non-decreasing in $s$. 
We assume that ${{C_{\alpha, n - 1}}(s)}$ is non-decreasing in $s$. 
Next, we show that ${{C_{\alpha, n}}(s)}$ is non-decreasing in $s$. 
When $s \ge {\Delta ^*}$, $C_{\alpha, n}(s)$ is a constant and is independent of $s$. 
On the other hand, when $s < {\Delta ^*}$, to better present our discussion, we denote 
\vspace{-0.1cm}
\begin{equation*}
    c \triangleq {p + \alpha \sum\limits_{z = 1}^\infty  {{{(1 - \lambda )}^{z - 1}}} \lambda {C_{\alpha ,n-1}}(z)}
    \vspace{-0.1cm}
\end{equation*}
and 
\begin{align*}
    {C^{\prime}_{\alpha,n-1} }(s) \triangleq& {f(s) + \alpha \sum\limits_{z = s + 1}^\infty  {{{(1 - \lambda )}^{z - s - 1}}} \lambda {C_{\alpha,n-1} }(z)}\\
    =&f(s) + \alpha \sum\limits_{k= 1}^\infty  {{{(1 - \lambda )}^{k - 1}}} \lambda {C_{\alpha,n-1} }(s+k),
\end{align*}
where the last step follows by setting $k=z-s$.
We want to show that ${C^{\prime}_{\alpha,n-1} }(s)$ is non-decreasing in $s$. To see this, consider two states: $ {j} \geq {i} > 0$. Then, we have
\begin{equation*}
    \begin{aligned}
        & {C^{\prime}_{\alpha ,n}}(j) - {C^{\prime}_{\alpha ,n}}(i)  \\
        & =  f(j) - f(i)  \\ & +  \alpha \sum\limits_{k = 1}^\infty  {{{(1 - \lambda )}^{k - 1}}} \lambda ({C_{\alpha ,n - 1}}({j} + k) - {C_{\alpha ,n - 1}}({i} + k)).
    \end{aligned}
\end{equation*}
Since $f(s)$ is non-decreasing in $s$, we have $f(j) \geq f(i)$. By inductive hypothesis that ${{C_{\alpha, n - 1}}(s)}$ is non-decreasing in $s$,
we have ${C_{\alpha ,n - 1}}({j} + k) \geq {C_{\alpha ,n - 1}}({i} + k)$ for any $k > 0$. Therefore,  ${C^{\prime}_{\alpha ,n}}(j) - {C^{\prime}_{\alpha ,n}}(i) \geq 0$, i.e., ${C_{\alpha ,n}}(s)$ is non-decreasing in $s$. This, along with $c$ being a constant, implies that $C_{\alpha, n}(s) = \min \{c,C^{\prime}_{\alpha,n-1}(s)\}$ is also non-decreasing in $s$ when $s < {\Delta ^*}$. 
Till this point, we have shown that ${C_{\alpha ,n}}(s)$ is non-decreasing in $s$ when $s < {\Delta ^*}$ and  when $s \ge {\Delta ^*}$, respectively. 
In addition, since ${C_{\alpha ,n}}(s)$ achieves a smaller value when $s < {\Delta ^*}$ (i.e., $\min \left\{ {c,C_{\alpha ,n - 1}^\prime (s)} \right\}$) compared to the case of $s \ge {\Delta ^*}$ (i.e., $c$), this implies that ${C_{\alpha ,n}}\left( s \right)$ is non-decreasing in $s$ for any $n$.

Finally, given any non-negative integer $s$, by the definition of  $C_{\alpha,n}(s)$ and the fact that the cost in each stage is non-negative, we know that $C_{\alpha,n}(s)$ is non-descreasing in $n$. In addition, 
${C_{\alpha ,n}}(s)$ is bounded. Indeed, the cost in each stage is bounded by the update cost $p$ under our considered policy. As such, 
${C_{\alpha ,n}}(s) \le \sum\nolimits_{i = 1}^n {{\alpha ^{i - 1}}p = (1 - {\alpha ^n})p/(1 - \alpha ) \le } p/(1 - \alpha )$.
Therefore, by monotone convergence theorem, we have
\vspace{-0.2cm}
\begin{equation}
    \mathop {\lim }\limits_{n \to \infty } {C_{\alpha ,n}}\left( s \right) = {C_\alpha }\left( s \right),
    \label{eq:convergence}
    \vspace{-0.2cm}
\end{equation}
holding for any state $s=0,1,2,\dots$.
It still remains to prove  that ${C_\alpha }\left( s \right)$ is non-decreasing in $s$. 
We prove this by contradiction. Suppose that ${C_\alpha }\left( s \right)$ is decreasing in $s$, i.e., there exists $i,j \in \mathcal{S}$ such that $i < j$ and ${C_\alpha }\left( i \right) > {C_\alpha }\left( j \right)$.
Let 
\vspace{-0.2cm}
\begin{equation}
    r = {C_\alpha }\left( i \right) - {C_\alpha }\left( j \right) > 0.
    \vspace{-0.2cm}
\end{equation}
Note that ${C_{\alpha ,n}}\left( s \right)$ converges to ${C_\alpha }\left( s \right)$ pointwise for any $s$,  there exists positive integers ${N_1}$ and $N_2$ such that for any $n > \max\{N_1,N_2\}$,  
\vspace{-0.2cm}
\begin{align}
    \left| {{C_\alpha }(i) - {C_{\alpha ,n}}(i)} \right| < \frac{r}{3}\\
   \text{and } \left| {{C_\alpha }(j) - {C_{\alpha ,n}}(j)} \right| < \frac{r}{3},
   \vspace{-0.2cm}
\end{align}
which are equivalent to
\vspace{-0.2cm}
\begin{equation}
     - \frac{r}{3} < {C_\alpha }(i) - {C_{\alpha ,n}}(i) < \frac{r}{3}
     \label{eq:C_i}
\end{equation}
\vspace{-0.2cm}
\begin{equation}
   \text{and }  - \frac{r}{3} < {C_\alpha }(j) - {C_{\alpha ,n}}(j) < \frac{r}{3}.
     \label{eq:C_j}
     \vspace{-0.2cm}
\end{equation}
By subtracting Eq.~\eqref{eq:C_i} from Eq.~\eqref{eq:C_j}, we have  
\begin{equation}
    - \frac{{2r}}{3} < ({C_\alpha }(j) - {C_{\alpha ,n}}(j)) - ({C_\alpha }(i) - {C_{\alpha ,n}}(i)) < \frac{{2r}}{3},
\end{equation}
Note that 
\begin{equation}
    \begin{aligned}
        & ({C_\alpha }(j)  - {C_{\alpha ,n}}(j)) - ({C_\alpha }(i) - {C_{\alpha ,n}}(i)) \\
        & = ({C_{\alpha ,n}}(i) - {C_{\alpha ,n}}(j)) - ({C_\alpha }(i) - {C_\alpha }(j)) \\
        & = ({C_{\alpha ,n}}(i) - {C_{\alpha ,n}}(j)) - r.
    \end{aligned}
\end{equation}
Then, we have 
\begin{equation}
    {C_{\alpha ,n}}(i) - {C_{\alpha ,n}}(j) > \frac{r}{3} > 0,
\end{equation}
which contradicts the fact that ${C_{\alpha ,n}}\left( s \right)$ is non-decreasing in $s$ for any $n$. Therefore, ${C_\alpha }\left( s \right)$ must be non-decreasing in $s$.
\end{proof}

\subsection{Proof of Lemma~\ref{lemma:bellman_avg}}
\label{proof:Bellman_Avg_Cost}

\begin{proof}
The proof consists two steps: 1) given the definitions of $h(s)$ and $g$, we show that Eq.~\eqref{eq:gel_avg_opt_eq} dose hold; 2) we show that Eq.~\eqref{eq:gel_avg_opt_eq} is the Bellman equation for the average cost.

\textbf{Step 1):}
As we can see that the Bellman equation Eq.~\eqref{eq:gel_avg_opt_eq} consists of two cases: $s \geq {\Delta ^*}$ and $s < {\Delta ^*}$. In this proof, we only prove the case of $s < {\Delta ^*}$, and a similar proof can also be applied to the case of $s \geq {\Delta ^*}$.

Define
\vspace{-0.1cm}
\begin{equation}
	{h_\alpha }(s) \triangleq {C_\alpha }(s) - {C_\alpha }(1),
 \vspace{-0.1cm}
\end{equation}
and substituting ${C_\alpha }(s)$ into the Bellman equation of the $\alpha$-discounted MDP Eq.~(\ref{eq:gel_dis_opt_eq}) gives
\vspace{-0.1cm}
\begin{equation}
\begin{array}{*{20}{l}}
{{h_\alpha }(s) + (1 - \alpha ){C_\alpha }(1)}\\
{ = \min \left\{ {p + \alpha \sum\limits_{z = 1}^\infty  {{{(1 - \lambda )}^{z - 1}}} \lambda {h_\alpha }(z),} \right.}\\
{\left. {s  + \alpha \sum\limits_{z = s + 1}^\infty  {{{(1 - \lambda )}^{z - s - 1}}} \lambda {h_\alpha }(z)} \right\}.}
\end{array}
\label{eq:h_dis_opt_eq}
\vspace{-0.1cm}
\end{equation}
Our idea is to find a sequence $\{ {\alpha _m}\}~\to~1$ (where $\ 0 \le \alpha_m  < 1$ for any $m$) such that $\mathop {\lim }\limits_{m \to \infty }  {h_{{\alpha _m}}}(s)  = h(s)$ for any $s$ and $\mathop {\lim }\limits_{m \to \infty }  (1 - {\alpha _m}){C_{{\alpha _m}}}(1)  =g$.

From Eq.~(\ref{eq:gel_dis_opt_eq}), we know that
\begin{equation*}
    \begin{aligned}
        {h_\alpha }(s) &= {C_\alpha }(s) - {C_\alpha }(1) \\
		& \le p  + \alpha \sum\limits_{z = 1}^\infty  {{{(1 - \lambda )}^{z - 1}}} \lambda {C_\alpha }(z) - {C_\alpha }(1) \\
		& = p  + \alpha \sum\limits_{z = 2}^\infty  {{{(1 - \lambda )}^{z - 1}}} \lambda {C_\alpha }(z) + (\alpha \lambda  - 1){C_\alpha }(1) \\
		& \le p + \alpha \sum\limits_{z = 2}^\infty  {{{(1 - \lambda )}^{z - 1}}} \lambda {C_\alpha }(z),
    \end{aligned}
\end{equation*}
where the last quantity is a constant and we denote it as $M \triangleq p + \alpha \sum\nolimits_{z = 2}^\infty  {{{(1 - \lambda )}^{z - 1}}\lambda {C_\alpha }(z)} $.  This immediately gives $\left| {{h_\alpha }(s)} \right| \le M$ because ${C_\alpha }\left( s \right)$ is non-decreasing in $s$ and thus ${{h_\alpha }(s)} \geq 0$, which also implies that ${h_\alpha }(s)$ is uniformly bounded for all $s$ and $\alpha$. 
Because every bounded sequence contains a convergent subsequence \cite[Bolzano–Weierstrass Theorem]{royden1988real}, we can find a sequence $\{ \alpha _m^{(1)}\} $ with $\mathop {\lim }\limits_{m \to \infty }  \alpha _m^{(1)}   = 1$ such that $\mathop {\lim }\limits_{m \to \infty } {h_{\alpha _m^{(1)}}}(1) \triangleq h(1)$ exists. Furthermore, because ${h_{\alpha _m^{(1)}}}(2)$ is also bounded, we can find a subsequence $\{ \alpha _m^{(2)}\}  \subseteq \{ \alpha _m^{(1)}\} $ such that $\mathop {\lim }\limits_{m \to \infty } {h_{\alpha _m^{(2)}}}(2) \triangleq h(2)$ exists. Similarly, we can continue this argument for $h(3)$ and so on. 
Finally, let $\{ {\alpha _m}\}  = \{ \alpha _m^{(m)}\} $, and when $m \to \infty $, it follows that $\mathop {\lim }\limits_{m \to \infty } {h_{{\alpha _m}}}(1) = h(1),\mathop {\lim }\limits_{m \to \infty } {h_{{\alpha _m}}}(2) = h(2),$ and so on. Therefore, $\mathop {\lim }\limits_{m \to \infty } {h_{{\alpha _m}}}(s) = h(s)$ for any $s$.

Under a capped reactive policy $\pi$, we have ${\mathbb{E}_\pi }[c({s_n},{a_n})|{s_1} = s] \leq p$. Then, for any ${{\alpha _m}}$, we obtain
\begin{equation}
	\begin{aligned}
		{C_{{\alpha _m}}}(1) &  = {\mathbb{E}_{{\pi}}}\left[ {\sum\limits_{n = 1}^\infty  {\alpha _m^{n - 1}c({s_n},{Y_n})} |{s_1} = 1} \right] \\
		& \le {\sum\limits_{n = 1}^\infty  {\alpha _m^{n - 1}} p } 
		 = \frac{p}{{1 - {\alpha _m}}},
	\end{aligned}
\end{equation}
which indicates ${C_{{\alpha _m}}}(1)$ is bounded.  This also implies that both of ${C_{{\alpha _m}}}(1)$ and $(1 - {\alpha _m}){C_{{\alpha _m}}}(1)$ are uniformly bounded for all ${\alpha _m}$. 
Hence, there exists a subsequence $\{ {\alpha _{\bar m}}\}  \subseteq \{ {\alpha _m}\} $ such that $\mathop {\lim }\limits_{{\bar m} \to \infty } (1 - {\alpha _{\bar m}}){C_{{\alpha _{\bar m}}}}(1) \triangleq g$ exists. Note that $\mathop {\lim }\limits_{{\bar m} \to \infty } {h_{{\alpha _{\bar m}}}}(s) = h(s)$ also hold for any $s$.

Plug ${\alpha _{\bar m}}$ into the Bellman equation Eq.~\eqref{eq:h_dis_opt_eq}, we have 
\begin{equation}
\begin{array}{*{20}{l}}
{{h_{\alpha _{\bar m}} }(s) + (1 - {\alpha _{\bar m}} ){C_{\alpha _{\bar m}} }(1)} = \\
{ \min \left\{ {p + 1 + {\alpha _{\bar m}} \sum\limits_{j = 1}^\infty  {{{(1 - \lambda)}^{j - 1}}} \lambda {h_{\alpha _{\bar m}} }(j),} \right.}\\
{\left. {s + 1 + {\alpha _{\bar m}} \sum\limits_{j = s + 1}^\infty  {{{(1 - \lambda )}^{j - s - 1}}} \lambda {h_{\alpha _{\bar m}}}(j)} \right\}}.
\end{array}
\label{eq:h_dis_opt_eq_m}
\end{equation}
By the Lebesgue's Bounded Convergence Theorem \cite{royden1988real} and the boundedness of ${h_{{\alpha _{\bar m}}}}(s)$, it follows that
\begin{equation}
\mathop {\lim }\limits_{\bar m \to \infty } \sum\limits_{j = k}^\infty  {{{(1 - \lambda )}^{j - 1}}} \lambda {h_{{\alpha _{\bar m}}}}(j) = \sum\limits_{j = k}^\infty  {{{(1 - \lambda )}^{j - 1}}} \lambda h(j)
\end{equation}
for any $k$.
Finally, Eq.~\eqref{eq:gel_avg_opt_eq} holds by taking limit as  $\bar m \to \infty $ at both sides of Eq.~\eqref{eq:h_dis_opt_eq_m}.

\textbf{Step 2):} 
To prove that Eq.~\eqref{eq:gel_avg_opt_eq} is the Bellman equation for the average expected cost, we apply the same techniques used in \cite[Chapter V, Theorem $2.1$]{ross2014introduction}. 
For the ease of expression, we rewrite Eq.~\eqref{eq:gel_avg_opt_eq} as a compact form, i.e., 
\begin{equation}
    {h(s) + g = \mathop {\min }\limits_{a \in {A_s}} \left\{ {c(s,a) + \sum\limits_{z \in S} p (z\mid s,a)h(z)} \right\}}.
    \label{eq:gel_avg_opt_eq2}
\end{equation}

First of all, we claim that $g$ is the optimal average expected cost, i.e.,
\begin{equation}
    g = \mathop {\min }\limits_{\pi  \in {\Pi ^{R + }}} \mathop {\lim }\limits_{N \to \infty } \mathbb{E}_\pi\left[ {\sum\limits_{n = 1}^N {c({s_n},{a_n})} }| s_1 = s \right] / N.
\end{equation}
To see this, let ${H_n} \triangleq ({s_1},{a_1}, \ldots ,{s_n},{a_n})$ denote the history of the process up to the decision epoch $n$. By the iterated expectation, for any decision epoch $i$ under policy $\pi$, we have 
\begin{equation}
    \mathbb{E}_\pi[h({s_i})] = \mathbb{E}_\pi[\mathbb{E}_\pi[h({s_i})|{H_{i - 1}}]],
\end{equation}
which gives 
\begin{equation}
\label{eq:exp_iterated_exp}
    {\mathbb{E}_\pi }\left[ {\sum\limits_{i = 1}^N {[h({s_i}) - {\mathbb{E}_\pi }[h({s_i})|{H_{i - 1}}]]} } \right] = 0.
\end{equation}
Based on the definition of ${\mathbb{E}_\pi }[h({s_i})|{H_{i - 1}}]$, we have
\begin{equation}
\label{eq:exp_hs_history}
    \begin{aligned}
        &{\mathbb{E}_\pi }[h({s_i})|{H_{i - 1}}] \\
        &= \sum\limits_{z \in S}  {p(z|{s_{i - 1}},{a_{i - 1}})h(z)} \\ 
        &= c({s_{i - 1}},{a_{i - 1}}) + \sum\limits_{z \in S} {p(z|{s_{i - 1}},{a_{i - 1}})h(z)}  - c({s_{i - 1}},{a_{i - 1}}) \\
        & \buildrel (a) \over \geq \mathop {\min }\limits_{{a \in {A_s}}} \left[ {c({s_{i - 1}},{a}) + \sum\limits_{z \in S} {p(z|{s_{i - 1}},{a})h(z)} } \right] 
        - c({s_{i - 1}},{a_{i - 1}}) \\
        &  \buildrel (b) \over = g + h({s_{i - 1}}) - c({s_{i - 1}},{a_{i - 1}}),
    \end{aligned}
\end{equation}
where (a) becomes equation under the optimal policy $\pi^*$, 
and (b) holds because of Eq.~\eqref{eq:gel_avg_opt_eq2}.

Taking Eq.~\eqref{eq:exp_hs_history} into Eq.~\eqref{eq:exp_iterated_exp}, we obtain 
\begin{equation}
    0 \leq {\mathbb{E}_\pi }\left[ {\sum\limits_{i = 1}^N {[h({s_i}) - g - h({s_{i - 1}}) + c({s_{i - 1}},{a_{i - 1}})]} } \right],
\end{equation}
which can be rewritten as 
\begin{equation}
    g \leq \frac{{{\mathbb{E}_\pi }[h({s_n})]}}{N} - \frac{{{\mathbb{E}_\pi }[h({s_1})]}}{N} + \frac{{{\mathbb{E}_\pi }\left[ {\sum\limits_{i = 1}^N {c({s_{i - 1}},{a_{i - 1}})} } \right]}}{N},
\end{equation}
where the inequality becomes equality under the optimal policy $\pi^*$. 
Taking the limit as $N \to \infty $ and using the fact that $h(s)$ is bounded for any $s$, we have 
\begin{equation}
    g \leq \mathop {\lim }\limits_{N \to \infty } \mathbb{E}_\pi\left[ {\sum\limits_{n = 1}^N {c({s_n},{a_n})} }| s_1 = s \right] / N,
\end{equation}
with equality for the optimal policy $\pi^*$ and any initial value $s_1$. That is, $g$ is the optimal average expected cost. This also implies that Eq.~\eqref{eq:gel_avg_opt_eq} is the Optimality Equation since any $h(s)$ and $g$ satisfying Eq.~\eqref{eq:gel_avg_opt_eq} result in the optimal average expected cost $g$. 
This completes the proof. 
\end{proof}

\subsection{Proof of Theorem~\ref{theorem:thresholdperformance}}
\label{appendix:theorem_2}
\begin{proof}
We start with some additional notations. For the $k$-th update interval $[{u_{k - 1}^\pi} + 1,{u_k^\pi }]$, we use $N_{k}$ and $C_{k}$ to denote the number of requests that arrive in $[{u_{k - 1}^\pi} + 1,{u_k^\pi }]$ and the total cost serving these $N_{k}$ requests, respectively.  

Since the request arrival process is Bernoulli, under a threshold-based update policy, the lengths of update intervals are \textit{i.i.d.}, so the update process is a renewal process. 
By the ergodicity of the process,  the average cost can be rewritten as 
\begin{equation}
\label{eq:elim}
    {\bar{C}^{\pi(\tau)} } = \mathbb{E}[{C_{k}}]/\mathbb{E}[{N_{k}}]. 
\end{equation}

To calculate $\mathbb{E}[N_{k}]$, we consider the requests that arrive in $[{u_{k - 1}^\pi} + 1,{u_k^\pi }]$. Apparently, there is only one request in $[{u_{k - 1}^\pi} + \tau ,{u_k^\pi }]$, which arrives exactly at ${u_k^\pi }$ because of the threshold-based policy. Besides, the expected number of requests arriving in $[{u_{k - 1}^\pi} + 1,{u_{k - 1}^\pi} + \tau  - 1]$ is $\lambda (\tau  - 1)$ according to the Bernoulli process. Therefore, we have 
\begin{equation}\label{eq:exp_num_requrests}
    \mathbb{E}[{N_{k}}] = \lambda (\tau  - 1)  + 1.
\end{equation}

The total cost in an update interval is composed  of an update cost and  some staleness costs, i.e., 
\vspace{-0.1cm}
\begin{equation}
    {C_{k}} = p + \sum\limits_{n = 1}^{{N_{k}}} {f(\Delta ({r_n})){\mathds{1}_{\{ {N_{k}} > 0\} }}},
\end{equation}
where ${\mathds{1}_{\{  \cdot \} }}$ is the indicator function. 
Here, we slightly abuse the notation of $n$ and use it to denote the index of requests arriving in $[{u_{k - 1}^\pi} + 1,{u_k^\pi }]$ (i.e., ${r_n}$ is the arrival time of the $n$-th request in $[{u_{k - 1}^\pi} + 1,{u_k^\pi }]$).
The staleness costs can be rewritten as
\begin{equation}
    \begin{aligned}
        \sum\limits_{n = 1}^{{N_{k}}} {f(\Delta ({r_n})){\mathds{1}_{\{ {N_{k}} > 0\} }}}  &= \sum\limits_{n = 1}^\infty  {f(\Delta ({r_n})){\mathds{1}_{\{ n \le {N_{k}}\} }}} \\ &= \sum\limits_{n = 1}^\infty  {f(\Delta ({r_n})){\mathds{1}_{\{\Delta ({r_n}) \le \tau - 1 \} }}}.
    \end{aligned}
\end{equation}
For the expected staleness cost of the $n$-th arrival of the requests, we have  
\begin{equation}
\label{eq:exp_cost_per_arr}
    \mathbb{E}[f(\Delta ({r_n})){\mathds{1}_{\{ \Delta ({r_n}) \le \tau -1 \} }}] = \sum\limits_{t = 1}^{\tau-1}  {f(\Delta (t)){p_n}(t)},
\end{equation}
where by slightly abusing the notation, we let $t$ denote the index of time-slot after  ${u_{k - 1}^\pi}$, and let ${{p_n}(t)}$ be the probability mass function that the $n$-th request arrives at time-slot $t$.
Here ${{p_n}(t)}$ follows the negative binomial distribution \cite{ross2014introduction2}, i.e., 
\begin{equation}
\label{eq:neg_bin_dis}
{p_n}(t) = \left\{ {\begin{array}{*{20}{c}}
{\binom{t-1}{n-1}{\lambda ^n}{{(1 - \lambda )}^{t - n}},\ \ \ t = n,n + 1, \cdots ;}\\
{0,\ \ \ {\rm{otherwise}}}.
\end{array}} \right.
\end{equation}
Plugging Eq.~\eqref{eq:neg_bin_dis} into Eq.~\eqref{eq:exp_cost_per_arr}, we obtain
\begin{equation}
\begin{aligned}
        &\mathbb{E}[f(\Delta ({r_n})){\mathds{1}_{\{ \Delta ({r_n}) \le \tau -1 \} }}] \\ &= \sum\limits_{t = 1}^{\tau-1} {f(\Delta (t))\binom{t-1}{n-1}{\lambda ^n}{{(1 - \lambda )}^{t - n}}},
\end{aligned}
\end{equation}
where the item in the summation equals $0$ when $t < n$.

We rewrite the expected total cost in an update interval as
\begin{equation}
\label{eq:exp_total_cost_one_interval}
    \begin{aligned}
        \mathbb{E}[{C_{k}}] &= p + \sum\limits_{n = 1}^{{N_{k}}} \mathbb{E}[{f(\Delta ({r_n})){\mathds{1}_{\{ {N_{k}} > 0\} }}}] \\
        & = p + \sum\limits_{n = 1}^\infty  {\sum\limits_{t = 1}^{\tau  - 1} {f(\Delta (t))\binom{t-1}{n-1}{\lambda ^n}{{(1 - \lambda )}^{t - n}}} }  \\
        & \buildrel (a) \over = p + \sum\limits_{t = 1}^{\tau  - 1} {\sum\limits_{n = 1}^\infty  {f(\Delta (t))\binom{t-1}{n-1}{\lambda ^n}{{(1 - \lambda )}^{t - n}}} } \\
        & \buildrel (b) \over = p + \sum\limits_{t = 1}^{\tau  - 1} {\sum\limits_{n = 1}^t {f(\Delta (t))\binom{t-1}{n-1}{\lambda ^n}{{(1 - \lambda )}^{t - n}}} }  \\
        & = p + \sum\limits_{t = 1}^{\tau  - 1} {f(\Delta (t))\lambda \sum\limits_{n = 1}^t {\binom{t-1}{n-1}{\lambda ^{n - 1}}{{(1 - \lambda )}^{t - n}}} } \\
        & \buildrel (c) \over = p + \sum\limits_{t = 1}^{\tau-1}  {f(\Delta (t))\lambda {{(\lambda  + (1 - \lambda ))}^{t - 1}}}  \\
         & = p + \sum\limits_{t = 1}^{\tau-1}  {f(\Delta (t))\lambda } \\
        & \buildrel (d) \over  = p + \sum\limits_{t = 1}^{\tau-1}  {f(t)\lambda },
    \end{aligned}
\end{equation}
where we interchange the order of summation in $(a)$ because the sum is finite, (b) is because the maximal number of requests cannot exceed the length of the update interval, (c) comes from the binomial theorem, and (d) is due to the definition of the AoI.
Finally, plugging Eqs.~\eqref{eq:exp_num_requrests} and \eqref{eq:exp_total_cost_one_interval} into Eq.~\eqref{eq:elim} gives Eq.~\eqref{eq:exp_avg_cost}.
\end{proof}  

\bibliographystyle{IEEEtran}
\bibliography{main.bib}

\vspace{-1cm}
\begin{IEEEbiography}[{\includegraphics[width=1in,height=1.25in,clip,keepaspectratio]{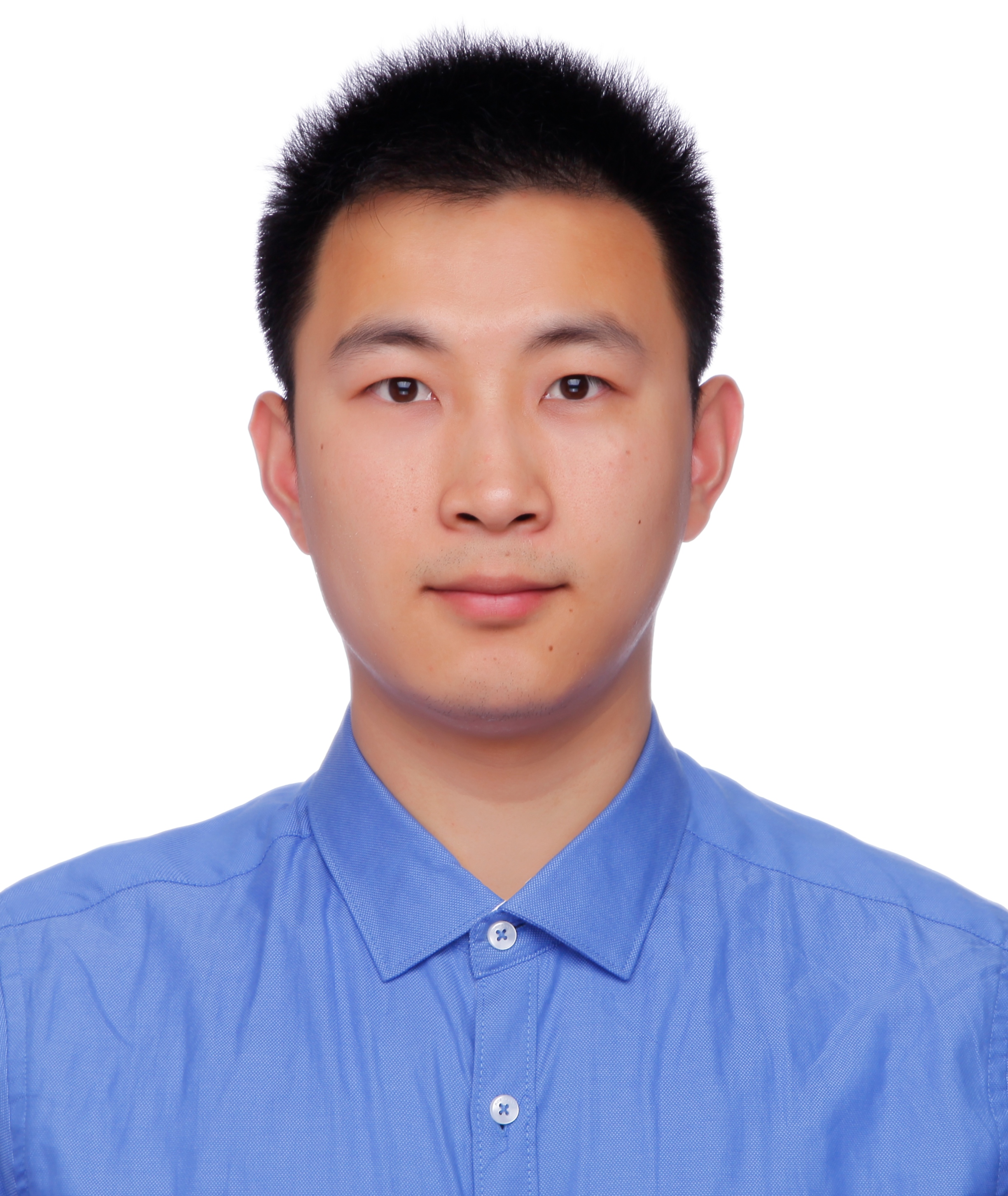}}]{Zhongdong Liu}
is a PhD student in the Department of Computer Science at Virginia Tech. He received his B.S. degree in Mathematics and Applied Mathematics with honor from Northeast Forestry University in 2016. His research interests are in the modeling, analysis, control, and optimization of complex network systems.
\end{IEEEbiography}

\vspace{-0.6cm}
\begin{IEEEbiography}[{\includegraphics[width=1in,height=1.25in,clip,keepaspectratio]{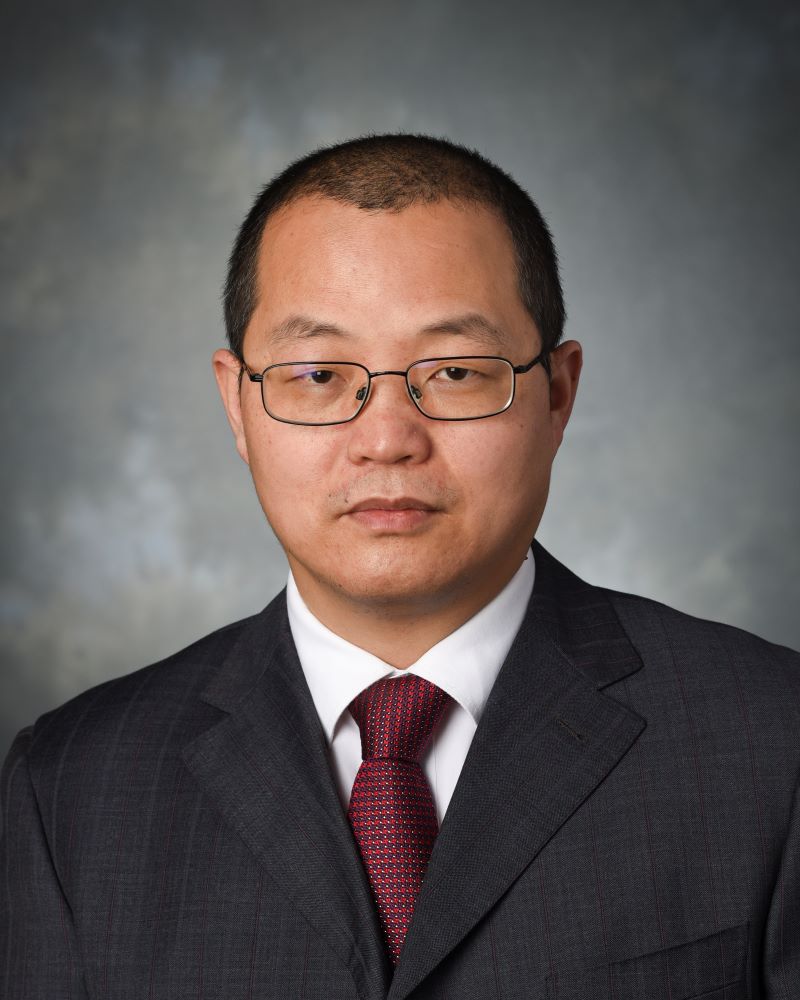}}]{Bin Li}(S'11-M'16-SM'20)
received the B.S. degree in Electronic and Information Engineering, M.S. degree in Communication and Information Engineering, both from Xiamen University, China, and Ph.D. degree in Electrical and  Computer Engineering from The Ohio State University. He is currently an associate professor in the Department of Electrical Engineering at the Pennsylvania State University, University Park, PA, USA. His research focuses on the intersection of networking, machine learning, and system developments, and their applications in networking for virtual/augmented reality, mobile edge computing, mobile crowd-learning, and Internet-of-Things. He is a senior member of the IEEE and a member of the ACM. He received both the National Science Foundation (NSF) CAREER Award and Google Faculty Research Award in 2020, and ACM MobiHoc 2018 Best Poster Award. 
\end{IEEEbiography}

\vspace{-1cm}
\begin{IEEEbiography}[{\includegraphics[width=1in,height=1.25in,clip,keepaspectratio]{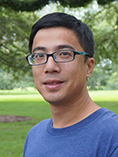}}]{Zizhan Zheng}(S'07-M'10)
received his Ph.D. in Computer Science and Engineering from The Ohio State University in 2010 and his M.S. in Computer Science from Peking University, China, in 2005. He worked as a postdoctoral researcher in the ECE department at The Ohio State University from 2010-2014 and as an associate specialist at UC Davis from 2014-2016. Dr. Zheng joined the CS department of Tulane University as an assistant professor in 2016. His current research interests include networking, trustworthy AI, reinforcement learning, and security. Dr. Zheng is a recipient of the NSF CAREER Award.
\end{IEEEbiography}

\vspace{-1cm}
\begin{IEEEbiography}[{\includegraphics[width=1in,height=1.25in,clip,keepaspectratio]
{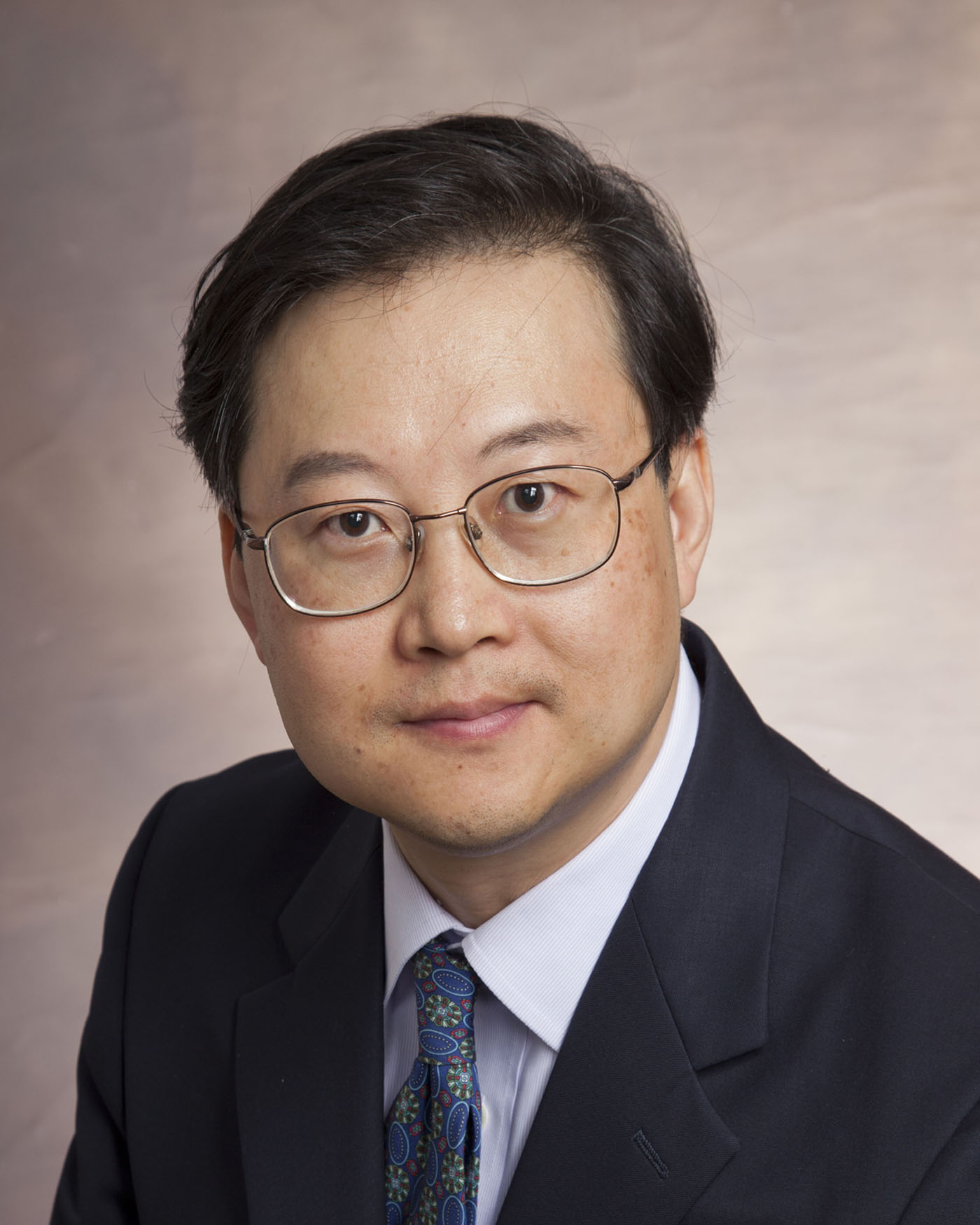}}]{Y. Thomas Hou}(Fellow, IEEE) is Bradley Distinguished Professor of Electrical and Computer Engineering at Virginia Tech, Blacksburg, VA, USA.
He received his Ph.D. degree from NYU Tandon School of Engineering in 1998.
During 1997 to 2002, he was a Member of Research Staff at Fujitsu Laboratories of America, Sunnyvale, CA, USA.  
Prof. Hou’s current research focuses on developing innovative solutions to complex science and engineering problems arising from wireless and mobile networks. 
He is also interested in wireless security.  
He has over 350 papers published in IEEE/ACM journals and conferences. 
His papers were recognized by nine best paper awards from the IEEE and the ACM. 
He holds six U.S. patents.  
He authored/co-authored two graduate textbooks: {\em Applied Optimization Methods for Wireless Networks \/}(Cambridge University Press, 2014) and {\em Cognitive Radio Communications and Networks: Principles and Practices\/} (Academic Press/Elsevier, 2009).
Prof. Hou was named an IEEE Fellow for contributions to modeling and optimization of wireless networks.  
He was/is on the editorial boards of a number of IEEE and ACM transactions and journals. 
He served as Steering Committee Chair of IEEE INFOCOM conference and was a member of the IEEE Communications Society Board of Governors.  He was also a Distinguished Lecturer of the IEEE Communications Society. 
\end{IEEEbiography}

\vspace{-1cm}
\begin{IEEEbiography}[{\includegraphics[width=1in,height=1.25in,clip,keepaspectratio]{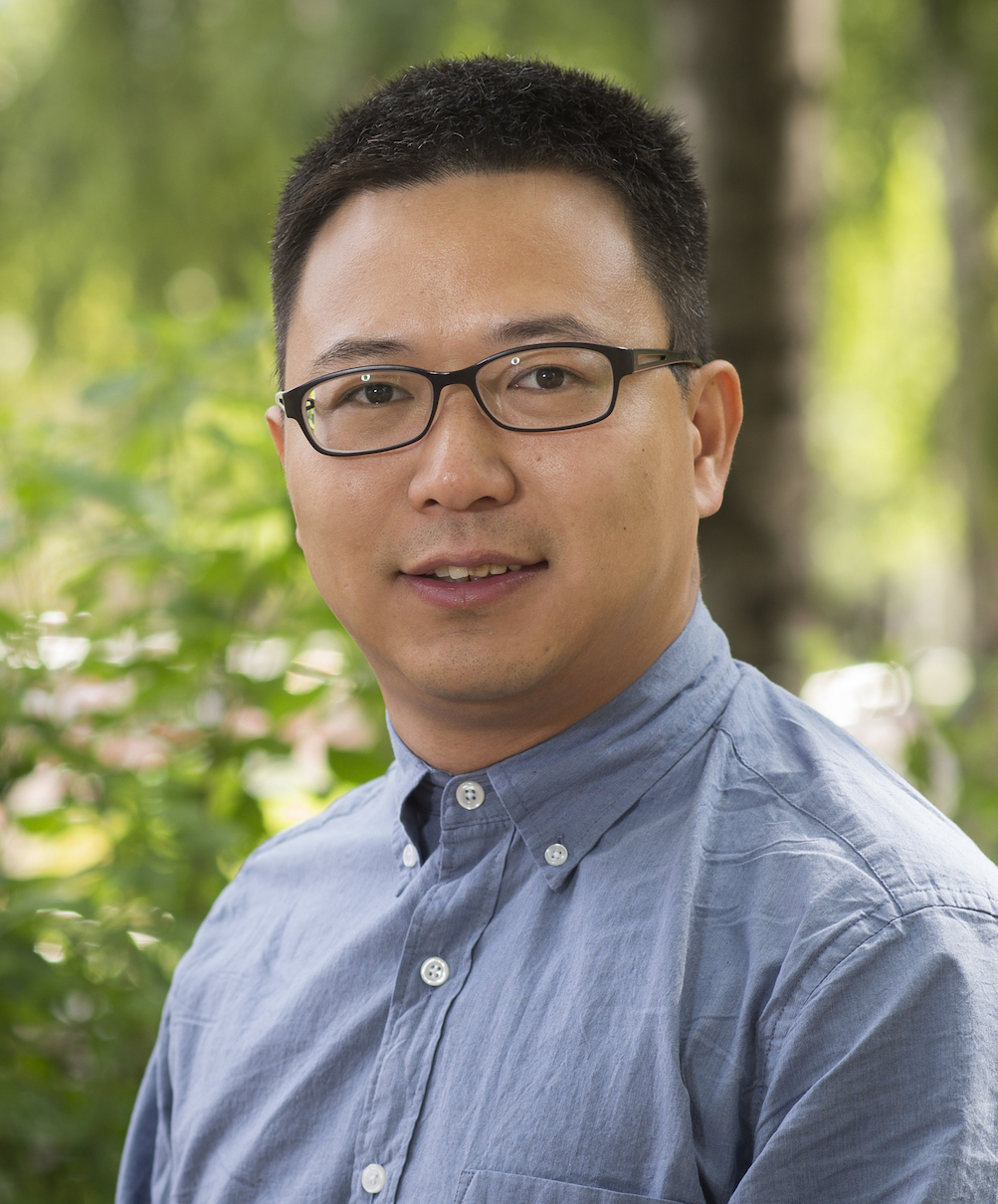}}]{Bo Ji}(S'11-M'12-SM'18)
received his B.E. and M.E. degrees in Information Science and Electronic Engineering from Zhejiang University, Hangzhou, China, in 2004 and 2006, respectively, and his Ph.D. degree in Electrical and Computer Engineering from The Ohio State University, Columbus, OH, USA, in 2012. Dr.~Ji is an Associate Professor in the Department of Computer Science at Virginia Tech, Blacksburg, VA, USA. Prior to joining Virginia Tech, he was an Associate/Assistant Professor in the Department of Computer and Information Sciences at Temple University from July 2014 to July 2020. He was also a Senior Member of the Technical Staff with AT\&T Labs, San Ramon, CA, from January 2013 to June 2014. His research interests are in the modeling, analysis, control, and optimization of computer and network systems, such as wired and wireless networks, large-scale IoT systems, high performance computing systems and data centers, and cyber-physical systems. He has been the general co-chair of IEEE/IFIP WiOpt 2021 and the technical program co-chair of ACM MobiHoc 2023 and ITC 2021, and he has also served on the editorial boards of the IEEE/ACM Transactions on Networking, IEEE Transactions on Network Science and Engineering, IEEE Internet of Things Journal, and IEEE Open Journal of the Communications Society. Dr.~Ji is a senior member of the IEEE and the ACM. He was a recipient of the National Science Foundation (NSF) CAREER Award in 2017, the NSF CISE Research Initiation Initiative Award in 2017, the IEEE INFOCOM 2019 Best Paper Award, the IEEE/IFIP WiOpt 2022 Best Student Paper Award, and the IEEE TNSE Excellent Editor Award in 2021 and 2022.
\end{IEEEbiography}

\end{document}